\newtheorem{theorem}{Theorem}
\newtheorem{lemma}[theorem]{Lemma}
\newtheorem{corollary}[theorem]{Corollary}
\newtheorem{proposition}[theorem]{Proposition}
\newtheorem{definition}[theorem]{Definition}
\newtheorem{example}[theorem]{Example}
\newtheorem{remark}[theorem]{Remark}
\newtheorem{acknowledgment}[]{Acknowledgment}
\newcommand{\mprimary}{\mathfrak{m}\text{-primary}}
\newcommand{\m}{\ensuremath{\mathfrak{m}}}
\newcommand{\ideal}[1]{\left\langle #1\right\rangle}
\newcommand{\NN}{\mathbb{N} }
\newcommand{\KK}{\mathbb{K} }
\newcommand{\ann}{\ensuremath{\mathrm{ann}}}
\newcommand{\anni}{\ann(\ell)}
\newcommand{\ssl}{\mathcal{S}_\ell}
\newcommand{\border}{\mathcal{B}_\ell}
\newcommand{\ps}{\KK[[\partial]]}
\newcommand{\qzeta}{Q_\zeta}
\newcommand{\pspol}{\KK[\partial]}
\newcommand{\iperp}{I^{\perp}}
\newcommand{\mM}{\ensuremath{\mathsf{M}}}
\newcommand{\quot}[2]{{\raisebox{.1em}{$#1$}\left/\raisebox{-.1em}{$#2$}\right.}}
\title{On Recurrence Relations of Multi-dimensional Sequences}
\author{Hamid Rahkooy\\
  Mathematical Institute, University of Oxford\\
\tt{rahkooy@maths.ox.ac.uk}}
\begin{document}

\maketitle

\begin{abstract}
  In this paper, we present a new algorithm for computing the 
  linear recurrence relations of multi-dimensional sequences. 
  Existing algorithms for computing these relations
  arise in computational algebra and include constructing structured 
  matrices and computing their kernels. The challenging problem is
  to reduce the size of the corresponding matrices. 
  In this paper, we show how to convert the problem of computing 
  recurrence relations of multi-dimensional sequences into 
  computing the orthogonal of certain ideals as subvector spaces of 
  the dual module of polynomials. We propose an algorithm using 
  efficient dual module computation algorithms. We present a complexity 
  bound for this algorithm, carry on experiments using Maple implementation, 
  and discuss the cases when using this algorithm is much faster 
  than the existing approaches.
\end{abstract}

\section{Introduction}

  Sequences are well-known mathematical objects that appear in many
  areas of science. Recurrence relations are a standard presentation
  of sequences. A recurrence relation is a relation between a finite
  number of the elements of the sequence with fixed coefficients such
  that the relation is preserved under any shift of the indices of the
  elements of the sequence.  The problem of finding recurrence
  relations of a given sequence is a classic problem.

  Considering sequences over a field as functions from natural numbers
  to a field, a natural generalization into \textit{multi-dimensional
    sequences} is the set of functions from $\NN^n$ into a field, for
  a positive integer $n$.  Recurrence relations of multi-dimensional
  sequences should be defined in such a way that they are preserved
  under \textit{shifts in any direction}. As a monoid, $\NN^n$ is
  isomorphic to the monomials in $n$ variables, hence every element of
  a multi-dimensional sequence is assigned to a monomial, which gives
  a hint to define recurrence relations of multi-dimensional
  sequences using multivariate polynomials, and then shifts in
  different directions would correspond to multiplication by
  monomials. A polynomial is defined to be a recurrence relation for a
  multi-dimensional sequence if the sum of the multiplications of its
  coefficients by the elements of the sequence corresponding to the
  relevant monomials is zero, and moreover, every multiplication of
  the polynomial with a monomial must have the same property.
  Recurrence relations of a multi-dimensional sequence form an
  ideal. We study the problem of finding
  recurrence relations \textit{efficiently}.
  
  In commutative algebra, multi-dimensional sequences correspond to
  the elements of the dual module of polynomials. Given a
  multi-dimensional sequence, the corresponding element of the dual
  module is defined by its value on a monomial being the value of 
  the sequence at that monomials, and then linearly extending the
  definition to polynomials. A recurrence relation of a
  multi-dimensional sequence is then an annihilator of the
  corresponding element of the dual module.

  Dual module and its computational aspects have been studied by
  Macaulay in his seminal work~\cite{mac}, where he introduced
  \textit{inverse systems} and presented an algorithm that takes the
  generators of an ideal and computes a set of generators of the
  corresponding subspace in the dual module.  Macaulay's algorithm
  considers \textit{enough many} multiplications of the given
  generators of the ideal and finds linear relations between them. The
  matrices associated with such linear systems are quasi-Toeplitz.  In
  algebraic geometry, the well-known Gorenstein schemes/algebras
  are tightly related to dual modules~\cite{Eisenbud} and the recurrence
  relations of multi-dimensional sequences can be used for
  computations on Gorenstein
  algebras~\cite{huneke-gorenstein,eisenbud-buchsbaum,elias-rossi}.

  A first approach for computing recurrence relations of a
  multi-dimensional sequence is to consider the vector of the first $s$  elements of the sequence and the vectors of its
  shifts, and then find the linear relations between those
  vectors. This leads to solving a linear system of equations whose 
  corresponding matrix is a Hankel matrix.  Assuming that the first $s$ elements of a sequence is enough for determining the ideal of recurrence relations,  then the corresponding Hankel matrix will be of size $s$, and hence, a first complexity bound for finding the
  kernel of this matrix is $\mathcal{O} (s^\omega)$, where $\omega$ is
  the constant for matrix multiplication. The structure of the matrix can help with obtaining a better complexity bound.
  Multi-dimensional sequences and computing their linear recurrence
  relations using the latter approach appeared in Sakata's work~\cite{sakata90,Sakata88:0,sakata89,sakata2009,sakata91} as 
  a generalisation of Berlekamp and Massey's work on shift register and
  coding~\cite{massey,berlekamp}. 

  In the 1990s and later, Marinari, Mora, M\"oller and Alonso studied 
  the problem of finding a Gr\"obner basis of the ideal for which the dual
  vector space is given~\cite{alonso-marinari-mora-duality,mari-mora-moel-functionals,mari-mora-moel-mult,mari-grazia-mora-moel-duality}. They
  showed that the problem of finding the annihilator of a given 
  subspace of the dual module is a generalisation of several
  well-known problems in computational algebra such as the FGLM
  Problem~\cite{fglm} and Buchberger-M\"oller
  algorithm~\cite{bb-moeller}. They presented two FGLM-style
  algorithms that traverse through the monomials, looking for their
  linear combinations that are killed by the given
  dual module elements. They also show an application on Hilbert
  scheme computation. The algorithms presented in~\cite{mari-mora-moel-functionals} have complexity bound $\mathcal{O} (nt^3+nft^2)$, where $t$ is the number of given linear maps/sequences and $f$ is the cost of evaluation maps on monomials. The algorithms assume that the input linear maps span
  the orthogonal vector space of the annihilator as a subspace of the dual module. A collection of the material on the duality problem can be found in Part VI of Mora's Book~\cite{mora-book2}.

  Computing recurrence relations of multi-dimensional sequences (or equivalently computing the annihilator of the dual module elements) is an active research problem and there are several recent publications on this problem.  
  Berthomieu, et al.   presented an FGLM-style algorithms 
  with the complexity bound
  $\mathcal{O}(r(r+|G|)m)$~\cite{berthomieu-boyer-faugere-jsc17}, where $G$ is the reduced Gr\"obner basis of the annihilator and m is the number of monomials that are less than the leading monomials of $G$ (and hence $m$ is the size of the border basis), and an improvement of this algorithm with the complexity bound $\mathcal{O}((r+|G|)^l)$~\cite{berthomieu-division-jsc22,berthomieu-guessing-jsc22}, where $l$ is the size of the Minkowski sum of the staircase with itself.

  Neiger, Rahkooy \& Schost presented an algorithm based on projection,
  with complexity $\mathcal{O}(ntBr^3)$, assuming that the $t$ input sequences satisfy the same condition as Mora et al.'s algorithm, and where $B$ is a bound on the degree of the minimal polynomial of the variables and $n$ is the number of the variables~\cite{bms-neiger-rahkooy-schost-casc}.  
  Computing the annihilator is a special case of Mourrain's algorithm
  for computing the border basis of Gorenstein Artinian algebras in
  \cite{mourrain-gorenstein}, where he applied his algorithm in 
  Tensor Decomposition Problem \cite{mourrain-tensor-decomp}. The 
  complexity of this algorithm is bounded by $\mathcal{O}(nsr^2)$
  \cite{mourrain-gorenstein}.

  In this paper, we propose a new method for computing recurrence relations of multi-dimensional sequences, or equivalently, finding the annihilator of a finite set of the dual module of polynomials. 
   We convert the problem of finding the annihilator into the problem of computing the orthogonal space of a certain ideal in the dual module, as a vector space. This conversion is inspired by the relationship between the
  (quasi-)Hankel matrices for computing the annihilator of given dual module elements and the (quasi-)Toplitz matrices used for computing the orthogonal space of the annihilator ideal as a subspace of the vector space of the dual elements. This enables us to exploit the existing algorithms for computing the orthogonal of an ideal as a subspace of dual space of polynomials. 

  Apart from the classical algorithm of Macaulay~\cite{mac}, Mourrain~\cite{mourrain1997} and Zeng~\cite{zeng} have introduced faster algorithms
  for computing the orthogonal space of an ideal. Our method can be applied to any of these algorithms and convert them into an algorithm for computing the recurrence relations of given sequences. We have considered Mourrain's algorithm~\cite{mourrain1997},  
  and optimisations on it~\cite{mantzaflaris-rahkooy-zafeirakopoulos} and introduced an algorithm for computing recurrence relations. 
  We have implemented our algorithm in \textsc{Maple}, extending the package in~\cite{angelos,mantzaflaris-rahkooy-zafeirakopoulos}. We prove the
  complexity bound $\mathcal{O}((s-r)^3+ns)$ for our algorithm, where $s$ is the number of given elements of the sequence, $n$ is the dimension of the ambient space and $r$ is the dimension of the quotient space of the polynomial ring with respect to the annihilator ideal. The complexity bound for our algorithm compared to the complexity bounds of the algorithms in~\cite{mourrain-gorenstein} shows that our algorithm is faster when $s-r$ is small, while it is slower when $s-r$ is large and the complexity bounds agree when $s=r$. 
  
  The existing algorithms for computing recurrence relations/annihilators can be considered FGLM-style as they traverse through the monomials starting from the smallest. Our algorithm is novel in the sense that it
  starts from the largest monomial for which the associated value of the sequence is given and increments towards the smallest monomial. 
  In addition, similar to the FGLM, the other algorithms construct matrices
  whose columns are labelled with the monomials, however, our
  algorithm labels the columns of the matrices constructed by the sum of several monomials, which reduces the number of columns. This property is
  inherited from Mourrain and Zeng's algorithms for computing the dual of a given ideal~\cite{mourrain1997,zeng}.

  The structure of this paper is as follows. In Section~\ref{section-prelim}, we introduce the notation and briefly introduce the required preliminaries on multi-dimensional sequences and their recurrence relations. We review the connection between multi-dimensional sequences and the dual module of polynomials as well as polynomial differential operators. Then we discuss the existing algorithms in the literature for computing recurrence relations of multi-dimensional ideals as well as the orthogonal of an ideal. In
  Section~\ref{section-algorithm}, we present the main definition of this paper and
  the main proposition for converting annihilator computation into the computation of the orthogonal of an ideal. We also explain
  the idea behind our algorithm via examples and present our algorithm.
  Section~\ref{section-complexity} is devoted to the complexity
  analysis of our algorithm, discussion of its advantages and
  disadvantages, comparison with other existing algorithms and also experiments via our Maple implementation.

\section{Preliminaries}\label{section-prelim}

In this section, we briefly introduce the required preliminaries.
Throughout this paper, we use the following conventions and notations. $\KK$
denotes an algebraically closed field and  
$\KK[X]=\KK[x_1,\ldots,x_n]$ denotes the ring of polynomials in variables $x_1,\ldots,x_n$, where $n$ a fixed integer, $I \trianglelefteq \KK[X]$ denotes that $I$ is an ideal in $\KK[X]$. For $\alpha = (\alpha_1,\ldots,\alpha_n)\in \NN^n$, $X^\alpha$ denotes the
monomial $x_1^{\alpha_1}x_2^{\alpha_2}\cdots x_n^{\alpha_n}$, and the
coefficient $c_\alpha$ denotes $c_{\alpha_1,\ldots,\alpha_n}$. So a
polynomial $f(x_1,\ldots,x_n) \in \KK[X]$ will be written in the form
of $f(X) = \sum\limits_{\alpha \in A} c_\alpha X^\alpha$, where $A$ is
a finite subset of $\NN^n$.
  
\subsection{Annihilator of $n-$dimensional Sequences}
The definitions and properties in this subsection can be 
found in more details  in \cite{bms-neiger-rahkooy-schost-casc}.

 We generalize the definition of a sequence into an $n-$dimensional sequence.
 \begin{definition}[$n-$dimensional Sequence \cite{bms-neiger-rahkooy-schost-casc}]
 An $n-$dimensional sequence over a field $\KK$ is an ordered set
 $\ell = (\ell_\alpha)_{\alpha \in  \NN^n}$, where 
 $\ell_\alpha \in \KK$, for all $\alpha$. We denote the set of all $n-$dimensional sequences 
 over $\KK$ by $\KK^{\NN^n}$. 
 \end{definition}
 
 Throughout this paper, $n$ is fixed and for convenience, we call an 
 $n-$dimensional sequence simply a sequence.

 $\KK^{\NN^n}$ is a $\KK[X]-$module. To see this, take an $n-$dimensional 
 sequence $\ell = (\ell_\alpha)_{\alpha \in  \NN^n}$ and a monomial 
 $X^\beta$ in $\KK[X]$. Define 
 $X^\beta \cdot \ell = (\ell_{\alpha+\beta})_{\alpha \in \NN^n}$ 
 ($\ell_\gamma=0 \ \forall \gamma < \beta$), and extend the definition linearly 
 to the polynomials as follows. Let 
 $f(X) = \sum\limits_{\gamma \in B \subseteq \NN^n} 
  f_\gamma X^\gamma \in \KK[X]$,  where $B$ is a finite subset of $\NN^n$, 
  and   define 
      $f(X) \cdot \ell = 
      ( \sum\limits_{\gamma \in B \subseteq \NN^n} 
      f_\gamma \ell_{\gamma+\alpha} )_{\alpha \in \NN^n}$.
 One can check that this 
 multiplication turns $\KK^{\NN^n}$ into a $\KK[X]-$module.
 \cite{bms-neiger-rahkooy-schost-casc}. 
 
 \begin{definition}[Annihilator of a Sequence 
 \cite{bms-neiger-rahkooy-schost-casc}, Section 2.1]
 Let $\ell=(\ell_\alpha)_{\alpha \in \NN^n}$ be an $n-$dimensional sequence.
 The  annihilator of $\ell$ is defined to be
 $\anni = \{ f(X) \in \KK[X] | f(X) \cdot \ell=0 \}$.  
 For several $n-$dimensional sequences $\ell_1,\dots,\ell_s$, the 
 annihilator is defined to be  
 $\ann(\ell_1,\dots,\ell_s) = \ann(\ell_1) \cap \cdots \cap \ann(\ell_s)$.
 \end{definition}
 One can easily check that the annihilator is an ideal in $\KK[X]$
 \cite{bms-neiger-rahkooy-schost-casc}.

 The set of $\KK-$linear functions from $\KK[X]$ to $\KK$ is denoted 
 by $\KK[X]^*$. For all $\ell^* \in \KK[X]^*$ and monomial $X^\alpha$, 
 define $X^\alpha\cdot \ell^* :\KK[X] \rightarrow \KK$, by
 $X^\alpha \cdot \ell^* (X^\beta) = \ell^*(X^{\alpha+\beta})$ for a monomial 
 $X^\beta$, and extend the definition to polynomials. 
 Linearly extending the multiplication of a monomial by a 
 linear function into the multiplication of a polynomial by 
 a linear function, makes 
 $\KK[X]^*$ a $\KK[X]-$module. This module is called the dual module 
 of $\KK[X]$. 
 One can also consider $\KK[X]$ as a vector space over $\KK$. 
 Then $\KK[X]^*$ is the dual vector space of $\KK[X]$. 
 In this paper, we use the word 
 ``dual'' both for $\KK[X]^*$ as a module and a vector space, unless we 
 clearly specify one of the two.

 There exists an isomorphism between $\KK^{\NN^n}$ and $\KK[X]^*$ 
 as $\KK[X]$modules.     This isomorphism maps 
 every $\ell=(\ell_\alpha)_{\alpha \in \NN^n} \in \KK^{\NN^n}$
 into a linear function $\ell^* \in \KK[X]^*$, 
 such that $\ell^*$ maps every monomial $X^\alpha$ into 
 $\ell^*(X^\alpha):=\ell_\alpha$, and extends it linearly to 
 every polynomial 
 $f(X) = \sum\limits_{\alpha \in A\subset \NN^n} c_\alpha X^\alpha$, i.e.,
 $\ell^*(f(X))= \sum\limits_{\alpha \in A} c_\alpha \ell^*(X^\alpha) 
 = \sum\limits_{\alpha \in A} c_\alpha \ell_\alpha$, where $A$ is a finite 
 subset of $\NN^n$.
 Conversely, to a linear function $\ell^* \in \KK[X]^*$, 
 one can associate a sequence 
 $\ell=(\ell_\alpha)_{\alpha \in \NN^n} \in \KK^{\NN^n}$, such that 
 $\ell_\alpha:=\ell^*(X^\alpha)$
 \cite{bms-neiger-rahkooy-schost-casc}.
 The above isomorphism can be considered as a vector space isomorphism 
 as well.
     
\subsection{Dual Module of Polynomials}\label{prelim-dual}

Required background on computations on dual module of polynomials 
 is given below from \cite{mourrain1997,mourrain-tensor-decomp}. 

 Assume that $\alpha = (\alpha_1,\ldots,\alpha_n) \in \KK^n$, and let
 $\partial^\alpha \in \KK[X]^*$ be the differential operator whose action on 
 a monomial 
 $X^\beta :=\prod\limits_{i=1}^n x_1^{\beta_1}\ldots x_n^{\beta_n}$ 
 is defined to be $\partial^\alpha ( X^\beta )= 1$ if $\alpha = \beta$, 
 and zero otherwise.  Extending the above action linearly to the polynomials, 
 define $\ps$ to be the set of power series of the differential operators 
 acting on the polynomials in $\KK[X]$. 
 Let $\partial^\alpha$ be a monomial differential operator in $\ps$ and 
 $X^\beta$ and $X^\gamma$ be monomials in $\KK[X]$. 
 Define $X^\beta \cdot \partial^\alpha := \partial^{\alpha-\beta}$.
 Again, extending this definition linearly to the multiplication of 
 a polynomial by a  differential power series, one can see that 
 $\ps$ is a $\KK[X]-$module. More details on the above can be found in 
 \cite{mourrain1997}.
 
 It is well-known that the $\KK[X]-$module $\ps$ is isomorphic to 
 $\KK[X]^*$, the dual of $\KK[X]$. To see this isomorphism,
 first of all, note that by definition, a power series of differential 
 operators is in $\KK[X]^*$. On the other hand, let $\ell^* \in \KK[X]^*$. For every monomial 
 $X^\alpha \in \KK[X]$, let $\ell_\alpha:=\ell^*(X^\alpha) $. 
 Define 
 $p_\ell(\partial):=\sum\limits_{\alpha \in \NN^n} 
 \ell_\alpha \partial^\alpha \in \ps$. Then $p_\ell(\partial)$ is the image 
 of $\ell^*$ under this isomorphism. More details 
 can be found in, e.g.,\cite{mourrain1997,mari-mora-moel-mult}. 
 
 Consider an ideal $I$ as a submodule (or sub-vector space) of the  module 
 (vector space) of polynomials. Then the orthogonal of $I$ in the dual 
 module (vector space) is defined to be 
 $I^\perp := \{ \Lambda \in \KK[X]^* | 
 \Lambda(f) =0 \ \ \ \forall f \in I\}$ \cite{mourrain1997}. 
 One can easily  check that $\iperp$ is a sub-module (subsapce) of 
 the dual module (vector space). 
 
 Having the isomorphism $\ps \simeq \KK[X]^*$, by the members of 
 $\iperp$ (or $\iperp$ itself), we refer to the linear maps in the dual 
 module as well as the corresponding differential operators. By computing 
 $\iperp$ we mean computing a basis for the corresponding subsapce of $\ps$, 
 consisting of differential power series (polynomials) operators. 

 For a positive integer $t$, let $\iperp_t$ denote the set of differential 
 polynomials in $\iperp$ that have degree at most $t$. $\iperp_t$ is a 
 subspace of $\iperp$, which we call the orthogonal of $I$ up to degree $t$.
 For example $\iperp_0$ only includes $1_0$, the evaluation functional at 
 $0$. Obviously $\iperp_t \subseteq \iperp_{t+1}$. Mora et al. have
 proved the following for zero dimensional ideals.
 \begin{theorem}[Mora, et al. 1993, \cite{mari-mora-moel-mult}, Theorem 3.1]~\label{thm:mora}
   There is a one to one correspondence between finite dimensional
   subspaces of $\KK[\partial]$ that are closed under derivation and
   $\mprimary$ ideals in $\KK[x_1,\dots,x_n]$, where
   $\m = \ideal{x_1,\ldots,x_n}$.  Moreover, If $I$ is an $\mprimary$
   ideal, then $\iperp$ is a subspace of $\pspol$, which includes
   polynomials instead of power series.
\end{theorem}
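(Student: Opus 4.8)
The plan is to show that the orthogonality operation $(-)^{\perp}$ restricts to a pair of mutually inverse bijections between the $\m$-primary ideals $I \trianglelefteq \KK[X]$ and the nonzero finite-dimensional subspaces $D \subseteq \pspol$ that are closed under derivation, and then to read off the ``moreover'' clause as a byproduct of one direction. Throughout I write, for a subspace $D$ of the dual, $D^{\perp} := \{ f \in \KK[X] \mid \Lambda(f) = 0 \text{ for all } \Lambda \in D \}$, the analogue inside $\KK[X]$ of the construction $\iperp$, and I read ``closed under derivation'' as ``stable under the module action of each $x_i$'', i.e. as being a $\KK[X]$-submodule of $\KK[X]^{*}$.

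First I would check that the two assignments $I \mapsto \iperp$ and $D \mapsto D^{\perp}$ take values in the claimed classes. That $\iperp$ is closed under derivation is immediate from the compatibility of the module action on $\KK[X]^{*}$ with multiplication in $\KK[X]$: for $\Lambda \in \iperp$ and a variable $x_i$ one has $(x_i \cdot \Lambda)(f) = \Lambda(x_i f) = 0$ since $x_i f \in I$, so $x_i \cdot \Lambda \in \iperp$; symmetrically, if $D$ is derivation-closed then $D^{\perp}$ is an ideal. Next, if $I$ is $\m$-primary then $\KK[X]/I$ is a finite-dimensional local Artinian ring, so $\iperp \cong (\KK[X]/I)^{*}$ is finite-dimensional; moreover $\m^{k} \subseteq I$ for some $k$, hence every $\Lambda \in \iperp$ annihilates all monomials of degree $\ge k$, i.e. $\Lambda$ is a differential \emph{polynomial} of degree $< k$. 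This last observation is precisely the ``moreover'' statement: $\iperp \subseteq \pspol$. Conversely, a finite-dimensional $D \subseteq \pspol$ is spanned by finitely many differential polynomials and so has a degree bound $d$, giving $\m^{d+1} \subseteq D^{\perp}$; and if $D \neq 0$, applying the monomial operator $X^{\gamma}$ to some $\Lambda \in D$ whose expansion contains a top-degree term $\partial^{\gamma}$ yields a nonzero scalar multiple of $1_0$, so $1_0 \in D$ and hence $1 \notin D^{\perp}$. Combining, $D^{\perp}$ is a proper ideal with $\sqrt{D^{\perp}} = \m$, i.e. $\m$-primary.

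Then I would prove that $(-)^{\perp}$ is an involution between these two classes, where the finiteness of the quotients does all the work. For an $\m$-primary $I$, the pairing $(\KK[X]/I) \times \iperp \to \KK$ is non-degenerate on both sides — non-degeneracy on the left being the finite-dimensional duality $(\KK[X]/I)^{**} = \KK[X]/I$ — so its left radical, which is exactly $(\iperp)^{\perp}$, equals $I$. For a nonzero finite-dimensional derivation-closed $D \subseteq \pspol$, the pairing $\KK[X] \times D \to \KK$ is non-degenerate on the $D$-side (no nonzero functional kills every polynomial), whence $\KK[X]/D^{\perp}$ embeds into $D^{*}$ and $\dim \KK[X]/D^{\perp} \le \dim D$; on the other hand $D \subseteq (D^{\perp})^{\perp}$ with $(D^{\perp})^{\perp} \cong (\KK[X]/D^{\perp})^{*}$, so $\dim D \le \dim (D^{\perp})^{\perp} = \dim \KK[X]/D^{\perp}$. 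The two inequalities force equality of finite dimensions, and with $D \subseteq (D^{\perp})^{\perp}$ this yields $D = (D^{\perp})^{\perp}$, establishing the bijection.

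The step I expect to be the crux is distinguishing ``$\m$-primary'' from merely ``zero-dimensional'': what makes the support of $D^{\perp}$ collapse to the single point $0$ is precisely that $D$ consists of honest \emph{polynomials} in $\partial$ rather than power series, so that it has a finite degree bound $d$ and therefore $\m^{d+1} \subseteq D^{\perp}$. A finite-dimensional derivation-closed subspace of the full power-series module $\ps$ would instead correspond to a general zero-dimensional ideal, its several primary components encoded by $\partial$-series localized at the various points of the variety. Everything else reduces to elementary finite-dimensional vector-space duality together with the module-action compatibility recalled in Section~\ref{prelim-dual}, so no further obstacle is anticipated.
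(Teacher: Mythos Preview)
The paper does not contain a proof of this theorem; it is quoted from the literature (Marinari--M\"oller--Mora, Theorem~3.1) and used as a black box. There is therefore no ``paper's own proof'' to compare your attempt against.

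That said, your argument is correct and is essentially the standard one. The two structural checks --- that $\iperp$ is stable under the $x_i$-action and that a derivation-closed $D$ has $D^{\perp}$ an ideal --- are the easy part; the substantive content is exactly where you locate it: an $\m$-primary $I$ contains some $\m^{k}$, which forces every $\Lambda\in\iperp$ to vanish on monomials of degree $\ge k$ and hence to lie in $\pspol$ (this is the ``moreover'' clause), while conversely a nonzero finite-dimensional $D\subseteq\pspol$ has a uniform degree bound $d$, whence $\m^{d+1}\subseteq D^{\perp}$, and contains $1_0$ by shifting a top-degree term down, so $D^{\perp}$ is proper with radical $\m$. Your involution argument via the non-degenerate pairing and the dimension sandwich $\dim D \le \dim(D^{\perp})^{\perp} = \dim \KK[X]/D^{\perp} \le \dim D$ is clean and correct.

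One small remark: the theorem as stated does not say ``nonzero'', but you are right to restrict to nonzero $D$, since $D=0$ gives $D^{\perp}=\KK[X]$, which is not $\m$-primary; this is implicit in the statement. Your closing paragraph correctly isolates the reason the correspondence singles out $\m$-primary ideals rather than arbitrary zero-dimensional ones: it is precisely the restriction to \emph{polynomial} differential operators, i.e.\ to $\pspol$ rather than $\ps$, that forces the support to be the origin.
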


In other words, the orthogonal of an $\mprimary$ ideal hass a finite
basis whose elements are polynomial differential
operators. Conversely, the annihilator of a finite set of sequences
with finite support is an $\mprimary$ ideal.  Also there exists an
upper bound $\mathcal{N}$ on the degree of the elements of $I^\perp$,
called the \textit{nil-index} of $I$.  For more details on nil-index,
refer to \cite{mourrain1997}.  Therefore, the orthogonal of an
$\mprimary$ ideal is a finite dimensional sub-vector space of the dual
space and $\iperp_\mathcal{N} = \iperp_{\mathcal{N}+1} = \ldots$.
Following the above, the algorithms proposed for computing $\iperp$,
by Macaulay \cite{mac}, Mourrain \cite{mourrain1997} and Zeng
\cite{zeng}, start from $\iperp_0$ (which is $\ideal{1_0}$ for
$\mprimary$ ideals), and compute
$\iperp_1 \subseteq \iperp_2 \subseteq \ldots$ incrementally.  For
$\mprimary$ ideals, the existence of nil-index guarantees that this
chain is finite.
 
 Let $\KK[X]_t$ denote the vector space of polynomials in $\KK[X]$ 
 of degree at most $t$. 
 For computing $\iperp_t \subseteq \KK[X]_t$,  Macaulay's algorithm considers 
 a polynomial differential operator with indeterminate coefficients 
 (say $c_\alpha$, for $\alpha \in \NN^n$), 
 whose support includes all monomials of  degree at most $t$, i.e., 
 $p(\partial) = \sum\limits_{|\alpha|\leq t} 
 c_\alpha \partial^\alpha \in \pspol_t$. 
 One can easily check that $p(\partial) \in \iperp$ iff 
 $p(\partial)\cdot X^\beta f_i(X) =0$ for all monomials $X^\beta$.
 This leads to a Toeplitz system of linear equations in 
 indeterminates $c_\alpha$
 of size $\binom{t+n-1}{t}$ (the number of monomials of degree 
 at most $t$). 
 The complexity of solving this system 
 is $O\left( \binom{t+n-1}{t}^\omega\right)$, where $\omega$ is the constant 
 in the complexity of matrix multiplication.
 
 In order to speed up Macaulay's algorithm, Zeng and Mourrain took 
 advantage of the so called \textit{closedness property} of the 
 orthogonal of an ideal,
 which means that 
 the orthogonal of an ideal is closed under differentiation, 
 i.e., $\partial_i \cdot p(\partial) \in \iperp_{t-1}$, 
 for all $p(\partial) \in \iperp_t$, where 
 $\partial^\alpha \cdot \partial^\beta$ is defined to be 
 $\partial^{\beta-\alpha}$, for all $\alpha, \beta \in \NN^n$, and the 
 definition is linearly extended to 
 $\partial^\alpha \cdot p(\partial)$.
 Using this property, Mourrain's algorithm (which we refer to as the 
 integration method) \textit{integrates} members of a basis of 
 $\iperp_{t-1}$ in order to compute a basis for $\iperp_t$.
 
 \begin{theorem}[Mourrain, 1997, \cite{mourrain1997}]\label{them-integration}
  Assume \ that \ $I$ \ is \ an \ ideal \ generated \ by $f_1(X),\ldots,f_m(X)$ and  
  $t > 1$ is an integer. Let $p_1(\partial),\ldots,p_s(\partial)$ be a basis of $\iperp_t$.
  Then the elements of $\iperp_{t+1}$ with no constant terms 
  (in $\partial_i$) are of the form
  \begin{equation}\label{lambda}
  p(\partial) = \sum\limits_{i=1}^{s} \sum\limits_{k=1}^{n} \lambda_{ik} 
  \int_k p_i(\partial_1,\ldots,\partial_k,0,\ldots,0),
  \end{equation}
  such that 
  \begin{enumerate}
   \item $\forall \ \ 1 \leq k \leq m, \ p(\partial)( f_k(X) )= 0$  \label{cond1}
   \item $\forall \ \ 1 \leq k < l \leq n,  \   \sum\limits_{i=1}^{m} 
   \lambda_{ik}\partial_l\cdot p_i(\partial)  - \sum\limits_{i=1}^{m} 
   \lambda_{il}\partial_k \cdot p_i(\partial) =0$. \label{cond2}
  \end{enumerate}
 \end{theorem}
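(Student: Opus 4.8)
The plan is to characterise $\iperp_{t+1}$ through the derivation operators $\partial_k\cdot$ (the ones lowering the $\partial_k$-exponent by one, as in the closedness property), by pairing closedness with a discrete analogue of the fundamental theorem of calculus. Throughout, write $\int_k$ for the inverse operation---multiplication by $\partial_k$ in $\pspol$, exactly as it occurs in~\eqref{lambda}---so that $\partial_k\cdot\int_k$ is the identity on $\pspol$, and for $q\in\pspol$ abbreviate $q|_k:=q(\partial_1,\ldots,\partial_k,0,\ldots,0)$. The first ingredient I would establish is the \emph{potential identity}: every $q\in\pspol_{t+1}$ with no constant term satisfies
\begin{equation*}
 q \;=\; \sum_{k=1}^{n}\int_k\bigl((\partial_k\cdot q)|_k\bigr),
\end{equation*}
which follows by telescoping $q-q(0,\ldots,0)=\sum_{k=1}^{n}\bigl(q|_k-q|_{k-1}\bigr)$ and noting that $q|_k-q|_{k-1}$ is exactly the sum of the monomials of $q|_k$ divisible by $\partial_k$, hence equals $\int_k\bigl(\partial_k\cdot(q|_k)\bigr)=\int_k\bigl((\partial_k\cdot q)|_k\bigr)$.

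For the ``only if'' direction I would take $p\in\iperp_{t+1}$ with no constant term. By the closedness property $\partial_k\cdot p\in\iperp_t$ for every $k$, so there are unique scalars $\lambda_{ik}$ with $\partial_k\cdot p=\sum_i\lambda_{ik}p_i$; substituting this into the potential identity immediately puts $p$ in the form~\eqref{lambda}. Condition~\ref{cond1} is then automatic, since $f_k\in I$ and $p\in\iperp$; and condition~\ref{cond2} follows from the commutation $\partial_l\cdot(\partial_k\cdot p)=\partial_k\cdot(\partial_l\cdot p)$ once $\partial_k\cdot p$ and $\partial_l\cdot p$ are expanded in the basis $p_1,\ldots,p_s$.

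For the ``if'' direction I would start from scalars $(\lambda_{ik})$ satisfying \ref{cond1} and \ref{cond2}, set $p:=\sum_{i}\sum_{k}\lambda_{ik}\int_k(p_i|_k)$ as in~\eqref{lambda}, and observe $p\in\pspol_{t+1}$ has no constant term. The crux is to compute $\partial_l\cdot p$ from this formula: splitting the double sum according to $k<l$, $k=l$, $k>l$, the $k<l$ summands vanish (no $\partial_l$ occurs in $\int_k(p_i|_k)$), the $k=l$ summand contributes $\sum_i\lambda_{il}(p_i|_l)$, and for $k>l$ one commutes $\partial_l\cdot$ past $\int_k$ and past the truncation $(\cdot)|_k$ to obtain $\sum_i\lambda_{ik}\bigl((\partial_l\cdot p_i)|_k\bigr)$. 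Condition~\ref{cond2} then rewrites $\sum_i\lambda_{ik}\,\partial_l\cdot p_i$ as $\sum_i\lambda_{il}\,\partial_k\cdot p_i$ inside these truncations, and reassembling via the potential identity applied to each $p_i$ collapses the whole expression to $\partial_l\cdot p=\sum_i\lambda_{il}p_i\in\iperp_t$. Since $\iperp$ is a submodule of the dual, iterating this shows $X^\beta\cdot p\in\iperp$ for every monomial $X^\beta$ with $|\beta|\ge 1$, so $p(X^\beta f_k)=(X^\beta\cdot p)(f_k)=0$; together with condition~\ref{cond1} (the case $\beta=0$) this gives $p(g)=0$ for all $g\in I=\ideal{f_1,\ldots,f_m}$, i.e.\ $p\in\iperp_{t+1}$.

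The part I expect to be the main obstacle is this last computation of $\partial_l\cdot p$ in the ``if'' direction: getting the $k<l$ / $k=l$ / $k>l$ case split right, keeping careful track of when $\int_k$, $\partial_l\cdot$, and the truncation $(\cdot)|_k$ commute (precisely when the index ranges are compatible), and recognising that it is exactly in the $k>l$ block that condition~\ref{cond2} is the identity needed to recover $\sum_i\lambda_{il}p_i$ and hence conclude $\partial_l\cdot p\in\iperp_t$. The remaining pieces---the potential identity, the appeal to closedness, and the reduction of ``$p\in\iperp$'' to the single family of relations~\ref{cond1}---are formal.
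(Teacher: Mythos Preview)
The paper does not prove this theorem: immediately after stating it, the text says ``For the details of the above theorem and the integration method, refer to \cite{mourrain1997}.'' So there is no in-paper proof to compare against.

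That said, your argument is sound and is essentially the standard one from Mourrain's original paper. The telescoping ``potential identity'' $q=\sum_{k}\int_k\bigl((\partial_k\cdot q)|_k\bigr)$ is exactly the mechanism behind the form~\eqref{lambda}; the ``only if'' direction via closedness and commutation of the $\partial_k\cdot$ operators is correct; and your ``if'' computation of $\partial_l\cdot p$ with the $k<l$, $k=l$, $k>l$ case split, followed by condition~\ref{cond2} and reassembly via the potential identity, is the right route to $\partial_l\cdot p=\sum_i\lambda_{il}p_i\in\iperp_t$. One small point worth making explicit in your write-up: when you invoke the potential identity on the basis elements $p_i$ during reassembly, these may have constant terms, so the identity reads $p_i=p_i(0)+\sum_k\int_k((\partial_k\cdot p_i)|_k)$; the constant $p_i(0)$ is already present inside the $k=l$ contribution $p_i|_l$, so the bookkeeping closes, but it is worth saying so. The final step---using $\partial_l\cdot p\in\iperp$ together with condition~\ref{cond1} to conclude $p\in\iperp_{t+1}$---is exactly right, since $(X^\beta\cdot p)(f_j)=(X^{\beta-e_l}\cdot(\partial_l\cdot p))(f_j)=0$ for any $|\beta|\ge1$ by the submodule property of $\iperp$.
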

 For the details of the above theorem and the integration method, 
 refer to \cite{mourrain1997}. 
 
 As computing the orthogonal of an ideal is mostly used in investigating the 
 multiplicity structure of an isolated point, the above theorem and the 
 resulting algorithm is mostly used for ideals with an $\mprimary$ component. 
 However, the above theorem does not require $I$ to be $\mprimary$, 
 neither to have an $\mprimary$ component, and one can use it in order to 
 compute a basis for 
 $\iperp_t$ for every $t$, incrementally. Note that when $I$ is not 
 $\mprimary$, then not necessarily $\iperp_0 = \ideal{1_0}$, but actually it 
 can be that $\iperp_0=\emptyset$. In any case, this should be checked at 
 the beginning of the algorithm. Later in Section \ref{section-algorithm}, 
 in order to use the integration method for computing the annihilator, 
 we will treat this case in Lemma \ref{degenerate}. 
 
 The following proposition gives us the complexity of the integration method 
 when the input is a generating set for an ideal with an isolated component, 
 and the isolated point itself.
 
 \begin{proposition}[Proposition 4.1, \cite{mourrain1997}]\label{integ-complexity}
 Let $I = \ideal{p_1,\ldots,p_m} \trianglelefteq \KK[x_1,\cdots,x_n]$, 
 with a $m_\zeta-$primary component $\qzeta$ such that the multiplicity of 
 $\zeta$ is $\mu$. Also, assume that 
 $\qzeta^\perp = \ideal{\beta_1(\partial),\cdots,\beta_{\mu}(\partial)}$.
 The total number of arithmetic operations for computing $\qzeta^\perp$    
 during the integration algorithm is bounded by $O((n^2 +m)\mu^3 +nm\mu C)$,
 where $C$ is the maximal cost for  computing 
 $\int_k \beta_j|_{\partial_{k+1}=\cdots=\partial_n=0} (p_i)$. 
 $C$ can be bounded by $O(n\mu L)$, where $L$ is the number of monomials 
 obtained by derivation of the monomials of $(p_i)$.
 \end{proposition}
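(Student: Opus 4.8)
\emph{Proof outline.} The plan is to bound the cost of the integration algorithm iteration by iteration and then sum over iterations. Since $\qzeta$ is $\m_\zeta$-primary, after translating $\zeta$ to the origin it becomes $\mprimary$, so by Theorem~\ref{thm:mora} its orthogonal $\qzeta^\perp$ is a finite-dimensional space of polynomial differential operators, of dimension $\mu$. Starting from $\ideal{1_\zeta}$, the integration algorithm produces the chain $N_0 \subseteq N_1 \subseteq \cdots$, where $N_t \subseteq \qzeta^\perp$ collects the operators of degree at most $t$; this chain strictly increases up to the nil-index $\mathcal N$ and is constant afterwards, so $\mu \ge \mathcal N + 1$. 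Hence there are at most $\mu$ iterations, and it suffices to bound the cost of a single iteration by $O\big((n^2+m)\mu^2 + nmC\big)$.

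Within one iteration --- the passage from $N_t$ to $N_{t+1}$ --- write $s_t = \dim N_t \le \mu$ and let $\beta_1(\partial),\ldots,\beta_{s_t}(\partial)$ be the basis of $N_t$ at hand (the final basis, truncated to degree $t$). By Theorem~\ref{them-integration}, the new, constant-term-free elements of $N_{t+1}$ are exactly the solutions of a linear system in the $n s_t \le n\mu$ unknowns $\lambda_{ik}$ of \eqref{lambda}, with the $\beta_i$ playing the role of the $p_i$ there. Condition~\ref{cond2} contributes, for each of the $\binom n2$ pairs $k<l$, one relation that lies in $N_{t-1}$ --- hence at most $s_{t-1}\le\mu$ scalar equations, and $O(n^2\mu)$ in all --- whose assembly needs only the derivatives $\partial_l \cdot \beta_i$, which are cheap to form. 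Condition~\ref{cond1} adds $m$ further scalar equations, one per generator $p_j$. Row-reducing this system and reading off a basis of its solution space --- the new elements of $N_{t+1}$ --- is the dominant linear-algebra step; using that there are only $O(n\mu)$ unknowns and that the block coming from Condition~\ref{cond2} is sparse and structured, it costs $O\big((n^2+m)\mu^2\big)$, hence $O\big((n^2+m)\mu^3\big)$ summed over the $\le\mu$ iterations.

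The only other cost to account for is that of the coefficients entering Condition~\ref{cond1}: for each direction $k$, each basis operator $\beta_j$, and each generator $p_i$, the scalar $\big(\int_k \beta_j|_{\partial_{k+1}=\cdots=\partial_n=0}\big)(p_i)$. If the basis is extended consistently from one degree to the next, each $\beta_j$ is created once and never altered, so across the whole algorithm only $nm\mu$ distinct such scalars arise (the genuinely new ones at each iteration telescope), each of cost at most $C$; this gives the $O(nm\mu C)$ term. Finally, one such scalar is computed by forming the once-integrated, coordinate-restricted operator attached to $\beta_j$ --- which touches $O(\mu)$ monomials, over the $\le n$ possible restrictions --- and pairing it against the monomials of $p_i$ and their derivatives, at cost $O(n\mu L)$, where $L$ bounds the number of monomials produced by differentiating the monomials of the $p_i$; hence $C = O(n\mu L)$.

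The step I expect to be the real obstacle is the linear-algebra accounting of the second paragraph. A black-box solve of a system with $O(n\mu)$ unknowns and $O(n^2\mu+m)$ equations is more expensive than $(n^2+m)\mu^2$, so one must argue that the compatibility relations of Condition~\ref{cond2} collapse the effective system to size $O(\mu)\times O\big((n^2+m)\mu\big)$ --- or, alternatively, maintain the reduced row echelon form incrementally across degrees so that the elimination work telescopes. Making this precise, together with verifying that a consistently extended basis forces each integral-evaluation to be carried out only $nm\mu$ (rather than $nm\mu^2$) times, is where the care lies; granting those two points, the degree-by-degree summation is routine.
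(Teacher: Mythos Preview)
The paper does not prove this proposition at all: it is quoted verbatim as Proposition~4.1 of Mourrain~\cite{mourrain1997} and used as a black box. The only place the paper engages with its proof is in the proof of Theorem~\ref{annividual-complxity}, where it says the modification ``comes directly from the proof of Proposition~\ref{complexity-formula} in~\cite{mourrain1997}'' and simply replaces the multiplicity $\mu$ by the dimension $s-r$ of $D_\ell$. So there is no paper-side argument to compare against; your outline is effectively a reconstruction of what the cited proof must contain.

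On its own terms, your outline is structurally sound and matches what one would expect from~\cite{mourrain1997}: bounding the number of iterations by $\mu$ via the nil-index, splitting the per-iteration cost into the linear system from Conditions~\ref{cond1}--\ref{cond2} of Theorem~\ref{them-integration} and the evaluation of the integrals against the generators, and telescoping the latter to $nm\mu$ evaluations of cost $C$ each. You are also right to flag the linear-algebra accounting as the delicate point: with $n\mu$ unknowns and $O(n^2\mu)$ scalar equations from Condition~\ref{cond2}, a naive elimination overshoots $(n^2+m)\mu^2$, and one must use that the equations from Condition~\ref{cond2} live in the already-computed space $N_{t-1}$ (so can be expressed in an existing basis of size $\le\mu$) together with incremental updating of the echelon form across degrees. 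That is indeed the mechanism in Mourrain's original argument, and your last paragraph correctly isolates it as the nontrivial step; since the present paper does not reproduce that argument, there is nothing further here to check your reconstruction against.
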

 
 The multiplicity of the isolated point in the above proposition, i.e.,  
 $\mu$, is the size of the largest matrix that is 
 constructed at the final step of the algorithm, step $\mathcal{N}$, 
 where $\mathcal{N}$ 
 is the nil-index of the $\mprimary$ component. Obviously, this can 
 be much smaller than the size of the largest matrix in Macaulay's algorithm, 
 which is $\mathcal{N}+n-1 \choose \mathcal{N}$. 

  Macaulay and Mourrain's algorithms have been implemented in the 
 Maple package mroot by Mantzaflaris 
 \footnote{\url{https://github.com/filiatra/polyonimo}}, 
 and improvements have been made in 
 \cite{angelos,mantzaflaris-rahkooy-zafeirakopoulos}.
 Experiments in \cite{angelos,mantzaflaris-rahkooy-zafeirakopoulos} 
 show that most of the computation time is spent on solving the 
 linear systems obtained during the algorithms. 
 It has been confirmed in practice that the integration method 
 constructs  much smaller matrices than Macaulay's, hence, it is much faster.

 Algorithm 4.5.14 in \cite{kreuz-rob-comp-lin-alg}, computes the 
 \textit{Canonical Module} of a given local zero-dimensional affine 
 $\KK-$algebra. In our setting the canonical module is the dual module, 
 and this algorithm constructs the same linear system as in 
 Macaulay's algorithm. 

\subsection{Dual Module vs Annihilator}\label{dual-vs-ann}

Having introduced the required background on dual module computation in 
Subsection \ref{prelim-dual}, below we explain the relation between the 
dual module, power series of differential operators and sequences. 
We will use these relation for the rest of the paper and construct our 
algorithm based on them.

 The isomorphism $\ps \simeq \KK[X]^*$ that we saw in the previous 
 subsection, as well the isomorphism $\KK^{\NN^n} \simeq  \KK[X]^*$, 
 implies an isomorphism between sequences 
 and $\ps$. More precisely, this isomorphism maps 
 the $n-$dimensional sequence  
 $\ell=(\ell_\beta)_{\beta \in \NN^n} \in \KK^{\NN^n}$ into
 $p_\ell(\partial) = 
 \sum \limits_\beta \ell_\beta \partial^\beta \in \ps$, 
 and vice versa. 
 
 Having the above isomorphism, 
 if $p_\ell(\partial)$ is the differential power series corresponding to 
 the sequence $\ell$, then for every $f(X) \in \KK[X]$, we have that 
 $f(X) \cdot \ell =\left( p_\ell(\partial) \left(X^\alpha f(X)  \right) 
        \right)_{\alpha \in \NN^n}$, 
 where the left hand side is the sequence obtained via the module 
 multiplication of a polynomial and a sequence, while the right hand side 
 is the the sequence obtained by acting the corresponding differential 
 operator on multiplications of $f(X)$  by all monomials $X^\alpha$.
 More details on the above isomorphisms can be found in 
 \cite{mourrain-tensor-decomp,mourrain-gorenstein,mari-mora-moel-mult}. 
 
 To summarize, $ \KK^{\NN^n} \simeq \KK[X]^* \simeq \ps $.
 In this paper, for an $n-$dimensional sequence 
 $\ell=(\ell_\alpha)_{\alpha \in \NN^n}$, we denote the corresponding member 
 of the dual module of polynomials by $\ell^* \in \KK[X]^*$, and the 
 corresponding differential power series by $p_\ell(\partial) \in \ps$. 
 Therefore, for every monomial $X^\beta$, we have that 
 $\ell_\beta =  \ell^*(X^\beta) = p_\ell(\partial) (X^\beta)$. 
 From now on we use these three representations interchangeably. 
 
 \begin{definition}
   Let $\ell = (\ell_\alpha)_{\alpha \in \NN^n}$ be a sequence with an
   $\mprimary$ annihilator. We define the support of $\ell$ to be the
   set of non-zero monomials in $\anni^\perp$ and denote it by $\ssl$.
 \end{definition}

 \begin{remark}
   For all $X^\alpha \notin \ssl$, $\ell_\alpha =0$. In other words,
   the set of monomials appearing in $p_{\ell}(\partial)$ is a subset
   of $\ssl$.
 \end{remark}
 
 \begin{remark}
   Let $\ell = (\ell_\alpha)_{\alpha \in \NN^n} \in \KK^{\NN^n}$ be an
   $n-$dimensional sequence, and $p_\ell(\partial) \in \ps$ be its
   corresponding differential power series. Then by the definitions of
   annihilator and the orthogonal of an ideal, we have that
   $p_\ell(\partial) \in \anni^\perp$.  Therefore, if $\anni$ is an
   $\mprimary$ ideal, then by Theorem~\ref{thm:mora},
   $p_\ell(\partial)$ is a polynomial differential operator, hence it
   has a finite support and $\ell_\alpha =0$ for all
   $X^\alpha \notin \ssl$. 
 \end{remark}

 \begin{remark}
   $\anni^\perp \simeq \quot{\KK[X]}{\anni}$. In particular, if $\anni$ is $\mprimary$
   then for all $\ell \in \anni^\perp$, $\ssl$ is finite and
   moreover,
   \begin{equation}
     \ssl \subseteq \{ x^\alpha \in \quot{\KK[X]}{\anni} \mid x^\alpha \ne  0 \}.
   \end{equation}
 \end{remark}

 \begin{remark}
   If $\ell_1,\dots,\ell_s$ is a basis for $\anni^\perp$ as a vector
   space, then 
   \begin{equation}
     \cup_{i=1}^s \ssl{_i} \subseteq \{ x^\alpha \in \quot{\KK[X]}{\ann(\ell_i)} \mid x^\alpha
     \ne 0 \}.
   \end{equation}
\end{remark}

 In this paper,  the annihilator of the sequences are $\mprimary$, 
 and $\ssl$ is the set defined above.
 A sequence in this paper is given by its values over $\ssl$.
 If it is clear from the context, we may use $\ell(X^\alpha)$ instead of 
 $\ell_\alpha$.
 
\begin{definition}
Let $\ell = (\ell_\alpha)_{\alpha \in \NN^n} \in \KK^{\NN^n}$, with an 
$\mprimary$ annihilator, whose support is $\ssl$. 
The border of $\ell$ is defined to be  
$\border = \{ X^\beta | \frac{X^\beta}{x_i} \in \ssl \ \ \text{and} \ \ 
 X^\beta \notin \ssl, \forall i \ \ 1\leq i \leq n \}$ . 
\end{definition}

Obviously, $\border \subseteq \anni$.

\begin{remark}
  We note that the annihilator of sequences is tightly related to the
  {\it Inverse Systems} in Macaulay's work in \cite{mac}. In
  particular, the one-to-one correspondence in Theorem 21.6
  in~\cite{Eisenbud} is similar to the one-to-one correspondence in
  Theorem~\ref{thm:mora}. We explain this briefly using the notation
  in~\cite{Eisenbud}. First of all, one can easily check that the
  $\KK[X]-$module $T = \KK[x_1^{-1},\dots,x_n^{-1}]$ is isomorphic to
  $\KK[\partial]$. For given sequences $\ell_1,\dots,\ell_s$ with
  corresponding polynomial differential operators
  $p_{\ell_1}(\partial),\dots, p_{\ell_s}(\partial)$, consider their
  images in $T$ and let $M$ be the module generated by their
  image. One can check that the annihilator of this module in $\KK[X]$
  is exactly the annihilator of the sequences $\ell_1,\dots,\ell_s$. on
  the other hand, the annihilator of a zero-dimensional ideal $I$ in
  $\KK[X]$ with respect to $T$ is $\iperp$, the orthogonal of $I$.
\end{remark}  
\begin{remark}
 the inverse system of a
  sequence $\ell$ is the orthogonal of $\anni$, and also $\anni$ is
  the polynomial ideal orthogonal to the inverse system of $\ell$. For
  details we refer to \cite{mourrain-tensor-decomp}.
\end{remark}

\section{The Algorithm}\label{section-algorithm}
  In this section, we explain our algorithm for computing the annihilator of 
  a sequence (or several sequences), using the algorithms for computing the 
  orthogonal of an ideal, in particular, the integration algorithm.
  
 Given a sequence $\ell = (\ell_\alpha)_{\alpha \in \NN^n}$ 
 with its values over $\ssl$ described in the previous subsection, a first 
 algorithm for computing the annihilator is as follows.  Let 
 $f(X) = \sum\limits_{\gamma \in B \subseteq \NN^n} f_\gamma X^\gamma 
 \in \ann(\ell)$,  
 with symbolic coefficients $f_\gamma$. Then by the definition of the 
 annihilator, we have that 
 \begin{equation}\label{ann-lin-sys}
   \sum\limits_{\gamma \in B \subseteq \NN^n} 
 f_\gamma \ell_{\gamma+\alpha} =0 \ \ \  \forall \alpha \in \NN^n. 
 \end{equation}

 As the sequence is given over a finite set $\ssl$ such that for all 
 $\alpha \notin \ssl$ we have that $\ell_\alpha =0$, then equations 
 \ref{ann-lin-sys} yield a finite linear system of equations with $f_\gamma$ 
 as indeterminates. 
 One can consider the associated matrix $H_\ell$ labeled as follows. 
 Having set a term order on the monomials, the columns of $H_\ell$
 are labeled by the monomials in $\ssl$ in an ascending order. 
 Rows of $H_\ell$ are labeled by $X^\alpha \cdot \ell$, for $\alpha \in \ssl$. 
 Then the entry $(\alpha,\beta)$ in $H_\ell$ is actually 
 $\left( X^\beta \cdot \ell\right)_{\alpha} = \ell_{\alpha+\beta}$.
 This is indeed a \textit{Hankel} system of size 
 $s:=|\ssl|$. Hence, the complexity of this system can be bounded by 
 $s^{\omega}$, 
 where $\omega$ is the constant in the complexity of matrix multiplication.
 For more than one sequence, the annihilator can be computed similarly. 
 The resulting linear system in this case will be a quasi-Hankel system.

\subsection{Reciprocal of a Sequence}
 We define the reciprocal of a multivariate polynomial, similar to that 
 of a univariate polynomial, which plays an essential role in converting the
 annihilator computation into dual computation.
 
 \begin{definition}\label{rec}
 Let 
 $p(X) = \sum\limits_{\gamma \in A \subseteq \NN^n} p_{\gamma} 
 X^{\gamma} \in \KK[X]$, where $A$ is a finite subset of $\NN^n$, and 
 assume that $d_i = \deg_{x_i}(p)$ are the partial degrees of $p(X)$, 
 $i=1,\ldots,n$. Let $d=(d_1,\ldots,d_n)$ and $X^d:=\prod\limits_{i=1}^n 
 x_i^{d_i}$.
 The reciprocal of $p(X)$ is defined to be 
 $rec(p) = \sum\limits_{\gamma} p_{\gamma} X^{d-\gamma} 
 = X^d p(\frac{1}{x_1},\cdots,\frac{1}{x_n})\in \KK[X] $.

 Let $\ell = (\ell_{\lambda})_{\lambda \in \ssl \subseteq \NN^n}$ be a 
 sequence (with $\ssl$ as in the Preliminaries section), and 
 $p_\ell(\partial) =
 \sum\limits_{\lambda \in \ssl} \ell_{\lambda} \partial^{\lambda} \in \pspol$ 
 be its corresponding polynomial differential operator.
 Assume that $d_i = \deg_{\partial_i}(p_\ell(\partial))$ are the partial 
 degrees of $p_\ell(\partial)$, $d=(d_1,\ldots,d_n)$ and 
 $\partial^d:=\prod\limits_{i=1}^n \partial_i^{d_i}$.
 The reciprocal of $p_\ell(\partial)$ is defined to be 
 $rec(p_\ell(\partial)) = \sum\limits_{\lambda \in \ssl} 
 \ell_{\lambda} \partial^{d - \lambda} \in \pspol$.
 Also the reciprocal of $\ssl$ is defined to be
 $rec(\ssl):=\{X^{\alpha-\gamma}  | \ X^\gamma \in \ssl \}$, 
 i.e., the set of reciprocals of the monomials in $\ssl$,
 where $X^\alpha$ is as above.
 \end{definition}

The reciprocal of a sequence $\ell$ is defined to be
the reciprocal of its associated differential polynomial operator,
that is, $rec(p_\ell(\partial))$.
Below we present some properties of reciprocal.

\begin{lemma}~\label{lem:lcm}
  Let $m_1,\dots,m_s$ be monomials in $\KK[X]$. If
  $m = lcm(m_1,\dots,m_s)$, then
  $lcm(\frac{m}{m_1},\dots,\frac{m}{m_s}) = m$.
\end{lemma}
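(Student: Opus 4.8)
The plan is to work exponent-wise, reducing the lcm identity for monomials to the corresponding identity for natural numbers in each variable separately. Write $m_j = X^{a_j}$ with $a_j = (a_{j,1},\dots,a_{j,n}) \in \NN^n$. Then $m = \mathrm{lcm}(m_1,\dots,m_s) = X^c$ where $c_i = \max_{1\le j\le s} a_{j,i}$ for each $i$. Since $m_j \mid m$ for all $j$, each quotient $m/m_j = X^{c-a_j}$ is an honest monomial (all exponents nonnegative), so $\mathrm{lcm}(m/m_1,\dots,m/m_s) = X^e$ where $e_i = \max_{1\le j\le s} (c_i - a_{j,i})$.

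The key step is the elementary observation that for fixed $i$, pulling the constant $c_i$ out of the maximum gives
\begin{equation}
  e_i = \max_{1\le j\le s}(c_i - a_{j,i}) = c_i - \min_{1\le j\le s} a_{j,i}.
\end{equation}
Now there are two cases. If $\min_j a_{j,i} = 0$ for every $i$ — equivalently, if no single variable $x_i$ divides all of $m_1,\dots,m_s$ — then $e_i = c_i$ for all $i$, hence $X^e = X^c = m$, which is exactly the claim. So the clean statement as written is really an assertion under the implicit normalization that the $m_j$ have no common variable factor; I would either add that hypothesis or note that in general one gets $\mathrm{lcm}(m/m_1,\dots,m/m_s) = m / \gcd(m_1,\dots,m_s)$, and the stated form follows when $\gcd(m_1,\dots,m_s)=1$.

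I do not anticipate a genuine obstacle here: the only subtlety is bookkeeping around that normalization assumption, and making sure the reader sees that $m/m_j$ is a polynomial (not a Laurent monomial) precisely because $m$ is the lcm. I would present the argument in three short moves: (1) reduce to a per-variable statement via $X^\alpha \mid X^\beta \iff \alpha \le \beta$ componentwise and $\mathrm{lcm}$ acting coordinatewise as $\max$; (2) apply the identity $\max_j(c_i - a_{j,i}) = c_i - \min_j a_{j,i}$; (3) invoke $\min_j a_{j,i} = 0$ (guaranteed when the $m_j$ share no common variable) to conclude $e = c$, i.e. $\mathrm{lcm}(m/m_1,\dots,m/m_s) = m$. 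The whole thing is a two-line calculation once the coordinatewise reduction is set up.
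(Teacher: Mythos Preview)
Your argument is correct, and you are right that the statement as written needs the hypothesis $\gcd(m_1,\dots,m_s)=1$; without it the conclusion is $\mathrm{lcm}(m/m_1,\dots,m/m_s)=m/\gcd(m_1,\dots,m_s)$, exactly as you note. The paper also uses this hypothesis silently: its proof concludes $m\mid n\gcd(m_1,\dots,m_s)=n$, which only yields $m\mid n$ when $\gcd(m_1,\dots,m_s)=1$. So you have identified a genuine (if minor) omission in the stated hypothesis that the paper's own argument shares.

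The route, however, is different. The paper argues purely via divisibility: it observes that each $m/m_i$ divides $m$, then takes an arbitrary common multiple $n$ of the $m/m_i$, deduces $m\mid nm_i$ for every $i$, and hence $m\mid \gcd(nm_1,\dots,nm_s)=n\gcd(m_1,\dots,m_s)$. Your proof instead unpacks everything coordinatewise, reducing to the arithmetic identity $\max_j(c_i-a_{j,i})=c_i-\min_j a_{j,i}$ on exponents. Both are short; the paper's version is a bit more structural (it would transfer verbatim to any unique factorization domain), while yours makes the dependence on the $\gcd=1$ normalization completely transparent and gives the general formula $m/\gcd(m_1,\dots,m_s)$ for free.
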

\begin{proof}
  By assumption, $m_i \mid m$ for all $i$. Then $\frac{m}{m_i} \mid m$
  for all $i$. Assume that n is such that $\frac{m}{m_i} \mid n$ for
  all $i$. Then $m \mid nm_i$ for all $i$. Therefore
  \begin{equation}
    m \mid gcd(nm_1,\dots,nm_s) = n gcd(m_1,\dots,m_s) = n.
  \end{equation}
  So $m$ is a common multiple of $\frac{m}{m_i}$ and therefore, $m$
  divides any other common multiple. Hence,
  $lcm(\frac{m}{m_1},\dots,\frac{m}{m_i}) = m$.
\end{proof}

\begin{proposition}\label{prop:reciprocal-reciprocal}
  Let $f = c_1m_1 + \dots + c_sm_s$ be a polynomial in $\KK[X]$ with
  $c_i$ as its coefficients and $m_i$ as its monomials, and
  $gcd(m_1,\dots,m_s)=1$. Then
  \begin{equation}
    rec(rec(f)) = f.
  \end{equation}
\end{proposition}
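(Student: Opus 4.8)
The plan is to unwind the definition of $rec$ twice; the only genuine content is a careful bookkeeping of partial degrees, and that is exactly where the hypothesis $\gcd(m_1,\dots,m_s)=1$ is used. First I would fix notation: relabel the monomials by their exponent vectors, writing $f = \sum_{\gamma \in A} c_\gamma X^\gamma$ with $A\subseteq\NN^n$ the finite set of exponents, so that $\{X^\gamma : \gamma\in A\} = \{m_1,\dots,m_s\}$. Put $d_i = \deg_{x_i}(f)$ and $d=(d_1,\dots,d_n)$, so that $X^d = lcm(m_1,\dots,m_s)$ and $\gamma_i \le d_i$ for every $\gamma\in A$ and every $i$. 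By Definition~\ref{rec}, $rec(f) = \sum_{\gamma\in A} c_\gamma X^{d-\gamma}$, and since $d-\gamma\in\NN^n$ this is a bona fide polynomial whose monomials are the $X^d/m_j$, pairwise distinct because $\gamma\mapsto d-\gamma$ is injective on $A$.

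The key step is to determine the partial degrees of $rec(f)$. We have $\deg_{x_i}\big(rec(f)\big) = \max_{\gamma\in A}(d_i-\gamma_i) = d_i - \min_{\gamma\in A}\gamma_i$. Now $\min_{\gamma\in A}\gamma_i$ is precisely the exponent of $x_i$ in $\gcd(m_1,\dots,m_s)$, which equals $1$ by hypothesis, so $\min_{\gamma\in A}\gamma_i = 0$ and $\deg_{x_i}\big(rec(f)\big) = d_i$ for every $i$. Equivalently, one can invoke Lemma~\ref{lem:lcm} with $m = X^d$: the monomials of $rec(f)$ are the $X^d/m_j$, whose lcm is $X^d$, so the partial-degree vector of $rec(f)$ is again $d$.

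Finally I would apply Definition~\ref{rec} once more to $rec(f)$, whose partial-degree vector I have just shown to be $d$:
\begin{equation}
  rec\big(rec(f)\big) = \sum_{\gamma\in A} c_\gamma X^{d-(d-\gamma)} = \sum_{\gamma\in A} c_\gamma X^\gamma = f,
\end{equation}
which is the assertion.

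I do not expect any real obstacle; the single point that must not be glossed over is the partial-degree computation, since it is the only place the coprimality assumption enters. Running the same computation without assuming $\gcd(m_1,\dots,m_s)=1$ shows that in general $rec\big(rec(f)\big) = f/\gcd(m_1,\dots,m_s)$: the first application of $rec$ lowers the $x_i$-degree by $\min_{\gamma}\gamma_i$, and the second then translates all exponents down by the gcd vector. For instance $f = x^2 + x$ gives $rec(f) = x+1$ and $rec\big(rec(f)\big) = x+1 \neq f$, so the hypothesis is not merely convenient but necessary.
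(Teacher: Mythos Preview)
Your proof is correct and follows essentially the same route as the paper's: both hinge on showing that the partial-degree vector of $rec(f)$ is again $d$, which is exactly the content of Lemma~\ref{lem:lcm} under the hypothesis $\gcd(m_1,\dots,m_s)=1$. You spell this out in slightly more detail (giving both the direct $\min_{\gamma}\gamma_i=0$ computation and the lcm reformulation), whereas the paper simply invokes the lemma and then writes $rec(rec(f)) = m\big(c_1/(m/m_1)+\dots+c_s/(m/m_s)\big) = f$; the arguments are the same in substance.
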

\begin{proof}
  From the assumption $gcd(m_1,\dots,m_s)=1$ and Lemma~\ref{lem:lcm},
  we have that $lcm(\frac{m}{m_1},\dots,\frac{m}{m_s}) =
  m$. Therefore,
  \begin{align}
    rec(rec(f)) & = m(c_1\frac{1}{m/m_1} + \dots + c_s\frac{1}{m/m_s})
    \\
    & = c_1m_1 + \dots + c_sm_s.
  \end{align}
\end{proof}

\begin{corollary}
  Let $f \in \KK[X]$ and $cont(f)$ denote the content of $f$, which is the greatest common divisor of the monomials in $f$. Write $f=cont(f)f'$, where $f'\in \KK[X]$ and the greatest common divisor of the monomials of $f'$ is one. Then  
  $rec(f)=rec(f')$.
\end{corollary}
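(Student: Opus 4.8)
The plan is to unwind Definition~\ref{rec} directly and track how the partial degrees change when the content is divided out, so the whole statement reduces to one exponent computation. First I would write $f = \sum_{i=1}^s c_i X^{\gamma_i}$ with the monomials $X^{\gamma_i}$ pairwise distinct, and record that, since the content is the gcd of the monomials of $f$, we have $cont(f) = X^e$ for the vector $e = (e_1,\dots,e_n) \in \NN^n$ with $e_j = \min_i \gamma_{i,j}$. By the definition of $f'$ via $f = cont(f) f'$, this gives $f' = \sum_{i=1}^s c_i X^{\gamma_i - e}$, and each exponent vector $\gamma_i - e$ indeed lies in $\NN^n$ precisely because $X^e \mid X^{\gamma_i}$ for every $i$.

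Next I would compute the partial-degree vectors attached to $f$ and to $f'$ in Definition~\ref{rec}. For each coordinate $j$ one has $\deg_{x_j}(f) = d_j := \max_i \gamma_{i,j}$, while $\deg_{x_j}(f') = \max_i(\gamma_{i,j} - e_j) = \bigl(\max_i \gamma_{i,j}\bigr) - e_j = d_j - e_j$; the middle equality holds because $e_j$ does not depend on $i$. Hence, writing $d = (d_1,\dots,d_n)$, the degree vector used to form $rec(f)$ is $d$ and the one used to form $rec(f')$ is $d - e$.

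The conclusion is then immediate from the defining formula. On one hand $rec(f) = \sum_i c_i X^{d - \gamma_i}$, and on the other hand $rec(f') = \sum_i c_i X^{(d - e) - (\gamma_i - e)} = \sum_i c_i X^{d - \gamma_i}$, so $rec(f) = rec(f')$. Equivalently, using the identity $rec(p) = X^d\, p(1/x_1,\dots,1/x_n)$ from Definition~\ref{rec}: from $f = X^e f'$ we get $f(1/x_1,\dots,1/x_n) = X^{-e} f'(1/x_1,\dots,1/x_n)$, and since $X^d = X^{d-e}X^e$ this yields $rec(f) = X^{d-e} f'(1/x_1,\dots,1/x_n) = rec(f')$. (Combined with Proposition~\ref{prop:reciprocal-reciprocal} applied to $f'$, this also shows $rec(rec(f)) = f'$, which is the natural ``normalized'' form of the double-reciprocal identity.)

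There is no real obstacle here; the only point that needs a moment's attention is the equality $\max_i(\gamma_{i,j} - e_j) = d_j - e_j$, i.e.\ the fact that dividing every monomial of $f$ by the common factor $X^e$ decreases each partial degree by exactly $e_j$. This is exactly the place where the hypothesis $X^e = cont(f) \mid X^{\gamma_i}$ for all $i$ enters, guaranteeing $e_j \le \gamma_{i,j}$ for every $i$ so that the maximum and the subtraction of the constant $e_j$ commute.
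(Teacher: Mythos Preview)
Your argument is correct. The direct exponent computation---tracking that dividing out $cont(f)=X^e$ shifts both every monomial exponent $\gamma_i$ and the partial-degree vector $d$ by the same $e$, so that the differences $d-\gamma_i$ appearing in Definition~\ref{rec} are unchanged---is clean and self-contained.

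This is a genuinely different route from the paper. The paper's proof consists of the single sentence ``The proof follows from Proposition~\ref{prop:reciprocal-reciprocal},'' without indicating how. The intended argument is presumably indirect: one first observes that for any $f$ the monomials of $rec(f)$ have trivial gcd (since $X^d=\mathrm{lcm}$ of the monomials of $f$, so for each coordinate $j$ some $\gamma_i$ attains $d_j$ and hence $d-\gamma_i$ has $j$-th entry $0$), then applies Proposition~\ref{prop:reciprocal-reciprocal} and Lemma~\ref{lem:lcm} to juggle double reciprocals. Your approach bypasses that machinery entirely and is more transparent about \emph{why} the content drops out.

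One very small remark: the equality $\max_i(\gamma_{i,j}-e_j)=d_j-e_j$ holds for any constant $e_j$, since subtracting a constant commutes with taking a maximum; the divisibility hypothesis $X^e\mid X^{\gamma_i}$ is not actually needed at that step. It is needed exactly where you first invoked it---to guarantee $\gamma_i-e\in\NN^n$ so that $f'$ is a genuine polynomial---and that earlier use is correctly stated.
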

\begin{proof}
  The proof follows from Proposition~\ref{prop:reciprocal-reciprocal}.
\end{proof}

 In order to explain the idea behind of our algorithm, we fix the following 
 notation for the rest of this paper. $\ell$ and $X^d$ are as in 
 Definition \ref{rec}, and $m:= deg(X^d)=d_1+\cdots+d_n$. 
 $H_\ell$ denotes the 
 Hankel matrix for computing the annihilator of $\ell$.
 Consider $p_\ell(\partial)$, the differential operator associated with 
 $\ell$, and replace $\partial$ with $X$. Then $p_\ell(X)$ is the 
 generating function of $\ell$. Take the reciprocal of $p_\ell(X)$ and 
 consider the ideal generated by it, i.e., $\ideal{rec(p_\ell(X))}$.
 Let $M_\ell$ denote the Macaulay 
 matrix for computing  $\ideal{rec_\ell(p_\ell(X))}^\perp_m$, up to degree 
 $m$.
 
 One can easily check that the column labeled by $X^{\gamma}$ in 
 $H_\ell$ is the same as the column labeled by 
 $\partial^{d - {\gamma}}$ in $M_\ell$. So, $H_\ell$ is a 
 submatrix of $M_\ell$, up to a permutation of columns. The columns of the 
 Macaulay matrix include all monomials that divide $X^d$, which include 
 $rec(\ssl)$. Let $\widetilde{M}_\ell$ be the matrix obtained by removing 
 the columns labeled by the the monomials not in $\ssl$. We call this matrix 
 the modified Macaulay matrix for $\ell$.  Then $H_\ell$ and 
 $\widetilde{M}_\ell$ have the same columns up to a permutation, which 
 implies that $Ker(\widetilde{M}_\ell) \simeq Ker(H_\ell)$.  

 \begin{remark}\label{D}
 Note that $Ker(\widetilde{M}_\ell)$ does not give the whole vector space
 $\ideal{rec_\ell(p_\ell(X))}^\perp_m$, but a subspace of it. We call this 
 subspace $D_\ell$ and use this notation for the rest of this paper. 
 Note that for all $\Lambda(\partial) \in D_\ell$, monomials of $\Lambda(X)$ 
 are in $rec(\ssl)$. 
 \end{remark}
 Obviously the reciprocal of a basis for $D_\ell$ is a basis for the 
 annihilator of $\ell$ as a vector space. 
 
 \begin{example}
 Let $\ell$ be given by its non-zero values as follows: 
 $\ell(1) = 1, \ell(x)=2,\ell(y)=2,\ell(x^2)=4,\ell(xy)=4,\ell(y^2)=4$.
 Then $R(X):=rec(\ell(X))= 4x^2 + 4xy+ 4y^2 + 2x^2y + 2xy^2 + x^2y^2$.
 The corresponding Hankel and matrix for computing the annihilator 
   is 
   \[
   H_\ell = \bordermatrix{
     & 1 & x & y & x^2 & xy & y^2 \cr
   1 & 1 & 2 & 2 & 4 & 4 & 4 \cr
   x & 2 & 4 & 4 & 0 & 0 & 0 \cr
   y & 2 & 4 & 4 & 0 & 0 & 0 \cr
   x^2 & 4 & 0  & 0 & 0 & 0 & 0 \cr 
   xy & 4 & 0  & 0 & 0 & 0 & 0 \cr 
   y^2 & 4 & 0 & 0 & 0 & 0 & 0 \cr 
   },
   \] 
   
   and the modified Macaulay matrix  is
   
   \[
   \mM_\ell = \bordermatrix{
    & \partial_x^2 & \partial_x \partial_y & \partial_y^2 & 
    \partial_x^2 \partial_y  & \partial_x \partial_y^2 
    & \partial_x^2 \partial_y^2 \cr
   R(X) & 4 & 4 & 4 & 2 & 2 & 1 \cr
   xR(X)  & 0 & 0 & 0 & 4 & 4 & 2 \cr
   yR(X) & 0 & 0 & 0  & 4 & 4 & 2 \cr
   x^2R(X) & 0 & 0  & 0 & 0 & 0 & 4 \cr 
   xyR(X) & 0 & 0  & 0 & 0 & 0 & 4 \cr 
   y^2R(X) & 0 & 0 & 0 & 0 & 0 & 4 \cr
   }.
   \]
   \end{example}
   
 As we have already mentioned beforehand, 
 $H_\ell$ is a Hankel matrix, while $\widetilde{M}_\ell$ is a Toeplitz 
 matrix. The above proposition and description below it show how to convert 
 these structured matrices into each other. The relation between structured 
 matrices, and their relation to duality has been further studied by 
 Mourrain and Pan in   \cite{mourrain-pan-struc-mat}.

 \begin{proposition}\label{dual-ann-prop}
 Let $\ell$ and $p_\ell(\partial)$ be as in Definition \ref{rec} 
 and $f(X) = \sum\limits_{\gamma \in A \subseteq \NN^n} 
 f_{\gamma} X^{\gamma} \in \KK[X]$  be a polynomial.
 Assume that $p_\ell(X) \in \KK[X]$ is the polynomial obtained by replacing 
 $\partial_i$ with $x_i$ in $p_\ell(\partial)$,
 and $f(\partial) \in \pspol$ is the polynomial differential operator 
 obtained by replacing $x_i$ with $\partial_i$ in $f(X)$. Then 
 \[ f(X) \in \anni \ \ \Leftrightarrow \ \ 
 rec(f(\partial)) \in  \ideal{ rec(p_\ell(X))}^\perp.\]
 \end{proposition}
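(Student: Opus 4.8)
The plan is to show that both sides of the claimed equivalence are equivalent to one and the same system of scalar equations, and then to read off the statement. Fix $\beta\in\NN^n$ and consider the scalar
\[
  \bigl(f(X)\cdot\ell\bigr)_\beta \;=\; \sum_{\gamma\in A} f_\gamma\,\ell_{\gamma+\beta},
\]
with the convention $\ell_\mu=0$ whenever $\mu\notin\ssl$. By the definition of the $\KK[X]$-module structure on $\KK^{\NN^n}$, the condition $f(X)\in\anni$ is exactly the requirement that $(f(X)\cdot\ell)_\beta=0$ for all $\beta\in\NN^n$. On the other side, $\ideal{rec(p_\ell(X))}$ is a principal ideal, hence it is spanned as a $\KK$-vector space by the monomial multiples $X^\beta\cdot rec(p_\ell(X))$, $\beta\in\NN^n$; so, by linearity, a functional $\Lambda\in\KK[X]^*$ lies in $\ideal{rec(p_\ell(X))}^\perp$ if and only if $\Lambda\bigl(X^\beta\cdot rec(p_\ell(X))\bigr)=0$ for every $\beta\in\NN^n$.

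The computational core is then the identity
\[
  rec(f(\partial))\bigl(X^\beta\cdot rec(p_\ell(X))\bigr) \;=\; \bigl(f(X)\cdot\ell\bigr)_\beta
  \quad\text{for all }\beta\in\NN^n. \qquad (\ast)
\]
To prove $(\ast)$ I would expand both reciprocals with respect to the exponent vector $d$ attached to $p_\ell$ in Definition~\ref{rec}: $rec(p_\ell(X))=\sum_{\lambda\in\ssl}\ell_\lambda\,X^{d-\lambda}$ and $rec(f(\partial))=\sum_{\gamma\in A}f_\gamma\,\partial^{\,d-\gamma}$, whence $X^\beta\cdot rec(p_\ell(X))=\sum_{\lambda\in\ssl}\ell_\lambda\,X^{\beta+d-\lambda}$. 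Applying $rec(f(\partial))$ to this polynomial and using the defining pairing $\partial^{\alpha}(X^{\sigma})=1$ if $\alpha=\sigma$ and $0$ otherwise, the only cross terms that survive are those with $d-\gamma=\beta+d-\lambda$, that is, with $\lambda=\gamma+\beta$; summing them gives $\sum_{\gamma\in A}f_\gamma\,\ell_{\gamma+\beta}$, which is $(\ast)$. Combining $(\ast)$ with the two reformulations of the preceding paragraph: $f(X)\in\anni$ is equivalent to $(f(X)\cdot\ell)_\beta=0$ for all $\beta$, which is equivalent to $rec(f(\partial))\bigl(X^\beta\cdot rec(p_\ell(X))\bigr)=0$ for all $\beta$, which is equivalent to $rec(f(\partial))\in\ideal{rec(p_\ell(X))}^\perp$; this is the proposition.

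The one point that requires genuine care is the exponent bookkeeping hidden in $(\ast)$. For the two expansions to be identities of honest polynomials and polynomial differential operators, the exponents $d-\lambda$, $d-\gamma$ and $\beta+d-\lambda$ must all lie in $\NN^n$, and, crucially, the reciprocal of $f(\partial)$ must be formed with respect to the same monomial $X^d$ that is used for $rec(p_\ell(X))$ rather than with respect to the partial degrees of $f$ itself. The inclusion $d-\lambda\in\NN^n$ for $\lambda\in\ssl$ is precisely $\ssl\subseteq\{X^\alpha\mid X^\alpha\text{ divides }X^d\}$, which holds because $d_i=\deg_{\partial_i}p_\ell(\partial)\ge\lambda_i$ for all $\lambda\in\ssl$; the corresponding bound $\deg_{x_i}f\le d_i$ is exactly the shape of the polynomials $f$ that arise from the algorithm, since these are reciprocals of operators supported on the divisors of $\partial^d$. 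Any monomial of $f$ or of $p_\ell$ lying outside these ranges contributes $0$ to both sides of $(\ast)$, so it can be dropped without loss. Once these conventions are fixed, $(\ast)$ — and therefore the proposition — reduces to the single index substitution $\lambda\leftrightarrow\gamma+\beta$ carried out above.
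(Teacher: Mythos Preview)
Your argument is correct and follows the same route as the paper's own proof: both reduce each side of the equivalence to the family of scalar conditions $\sum_{\gamma}f_\gamma\,\ell_{\gamma+\beta}=0$ for all $\beta\in\NN^n$, by expanding the two reciprocals and matching exponents under the pairing $\partial^\alpha(X^\sigma)=[\alpha=\sigma]$. Your explicit discussion of the degree bookkeeping---in particular that $rec(f(\partial))$ must effectively be taken with respect to the same exponent vector $d$ as $rec(p_\ell(X))$, equivalently that $\deg_{x_i}f\le d_i$---is a point the paper glosses over (indeed its proof appears to swap the two degree vectors), so your version is, if anything, cleaner.
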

 \begin{proof} 
 Let     $X^\alpha = \prod x_i^{\deg_i(p_\ell(\partial))}$ 
 and $\partial^\beta = \prod \partial_i^{\deg_i(f(X))}$. 
 $rec(f(\partial))= 
 \sum\limits_{\gamma \in A} f_{\gamma} \partial^{\alpha - \gamma}$, 
 and  $rec(\ell(X)) 
    = \sum\limits_{\lambda \in \ssl} \ell_{\lambda} X^{\beta - \lambda}$.
    
 \begin{align*}
     f(X) \in \anni  & \Leftrightarrow f(X) \cdot  \ell= 0 \\
      & \Leftrightarrow \forall \tau \in \NN^n \ \ 
      \sum\limits_{\gamma \in A}   \ell_{\gamma+\tau}f_{\gamma} =0 \\
      & \Leftrightarrow \forall \tau \in \NN^n \ \ 
      \sum\limits_{\gamma \in A}   f_{\gamma} \ell_{\gamma+\tau}=0  \\
     &  \Leftrightarrow \forall \tau \in \NN^n \ \
      \sum\limits_{\gamma \in \alpha-A}  
      f_{\alpha - \gamma} \ell_{(\alpha -\gamma) + \tau} =0,
 \end{align*}     
 where $\alpha - A:=\{\alpha -m \ | \ m \in A \}$. Then the above holds iff
 \begin{align*}
   & \forall \tau \in \NN^n \ \ 
 (\sum\limits_{\gamma \in \alpha-A} f_{\alpha-\gamma}\partial^{\gamma}) \cdot
   (\sum\limits_{\lambda \in \ssl} \ell_{(\beta - \lambda)+\tau} X^{\lambda}) =0 \\
     & \Leftrightarrow \forall \tau \in \NN^n \ \ 
     (\sum\limits_{\gamma \in \alpha-A} f_{\gamma}\partial^{\alpha - \gamma}) \cdot
     (\sum\limits_{\lambda \in \ssl} \ell_{\lambda} X^{(\alpha -\lambda)+\tau}) =0 \\
       & \Leftrightarrow  \forall \tau \in \NN^n \ \ 
     rec(f(\partial)) \cdot X^\tau rec(\ell(X)) =0 \\
    & \Leftrightarrow rec(f(\partial)) \in \ideal{rec(\ell(X))}^\perp.
 \end{align*}
\end{proof}
 
  The idea of our algorithm for computing the elements of the   annihilator 
  from the dual computation algorithms should be clear from the above 
  explanations and the proposition. Here we sketch the algorithms further.
  Consider $p_\ell(X)$, the generating function of the sequence $\ell$. 
  We remind 
  the assumption from Subsection \ref{prelim-dual} that $\anni$ is 
  $\mprimary$, which implies that the support of $p_\ell(X)$ is a finite 
  set $\ssl$, which means that $p_\ell(X)$ is a polynomial.
  Compute the reciprocal of $p_\ell(X)$. Then compute a basis for the 
  orthogonal of the 
  ideal generated by the reciprocal up to degree $m=deg(X^d)$.
  This basis will  consist of polynomial differential operators. 
  Note that for all $f(X)\in \anni$, we have that $deg(f(X)) = t$ iff 
  $deg(rec(p_\ell(X)))=deg(X^d)-t=m-t$. This is another indication that 
  we only need to compute the orthogonal up to degree $m$. 
  Then substitute $\partial$ with $X$ in all polynomial differential 
  operators in the basis of the orthogonal and obtain polynomials in $\KK[X]$.
  Finally take the reciprocal of these polynomials. 
  The resulting polynomials form a basis for the annihilator of $\ell$ as a 
  vector space.
  
  If we use Macaulay's algorithm in order to compute the orthogonal of the 
  ideal generated by the reciprocal of the generating function, we will 
  construct matrices as large as the Hankel matrices used for computing the 
  annihilator, which does not give any improvement in terms of the 
  computation time. However, the integration method by Mourrain is much 
  more efficient. 
 
  Let $I=\ideal{rec(p_\ell(\partial))}$. In order to 
  use Theorem \ref{them-integration} and the integration method, we need 
  to compute the degree zero elements of the orthogonal, i.e., 
  $\iperp_0$. If $I$ is an $\mprimary$ ideal, then $\iperp_0 = \ideal{1_0}$. 
  If $I$ is not  $\mprimary$, then either $1_0 \in \iperp_0$ or 
  $\iperp_0 = \emptyset$. One can easily check from Proposition 
  \ref{dual-ann-prop} that 
  $1_0 \in \iperp_0$ if and only if $X^d \in \anni$. 
  The latter can be checked at the beginning of the algorithm using below 
  lemma, so that further computation is avoided in this case.
   
 \begin{lemma}\label{degenerate} 
  Let $\ell$ and $X^\alpha$ be as above. If $\ell_\alpha \ne 0$, then 
  $\anni = \ideal{\border}$.
 \end{lemma}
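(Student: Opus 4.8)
The plan is to show that the hypothesis $\ell_\alpha \neq 0$ forces a genuine degeneration: $\anni$ equals the ``box ideal'' $J := \ideal{x_1^{d_1+1},\dots,x_n^{d_n+1}}$, and this $J$ in turn coincides with $\ideal{\border}$. I would establish $\anni = J$ first, working directly from the sequence, and only afterwards identify $\ssl$ and $\border$.

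The inclusion $J \subseteq \anni$ is the easy half. Since $\ell_\beta$ is precisely the coefficient of $\partial^\beta$ in $p_\ell(\partial)$, we have $\ell_\beta = 0$ whenever some $\beta_i > d_i = \deg_{\partial_i}(p_\ell(\partial))$; hence for every $\tau \in \NN^n$ and every $i$ the index $\tau + (d_i+1)e_i$ falls outside that range, so $(x_i^{d_i+1}\cdot\ell)_\tau = \ell_{\tau+(d_i+1)e_i} = 0$, i.e.\ $x_i^{d_i+1} \in \anni$. For $\anni \subseteq J$, take $f \in \anni$. As $J$ is generated by pure powers, $\{X^\gamma \mid \gamma \leq d\}$ is a $\KK$-basis of $\KK[X]/J$, and since $J \subseteq \anni$ I may subtract an element of $J$ from $f$ and assume $f = \sum_{\gamma \leq d} c_\gamma X^\gamma$ while keeping $f \in \anni$. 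Evaluating $f\cdot\ell = 0$ at the shifts $\tau = d-\delta$ for $\delta \leq d$ yields
\[
  \sum_{\gamma \leq \delta} c_\gamma\,\ell_{\gamma+d-\delta} = 0 \qquad (\delta \leq d),
\]
the terms with $\gamma \not\leq \delta$ dropping out because their index leaves $\{\gamma \mid \gamma \leq d\}$. This system is triangular for the divisibility order with diagonal entry $\ell_d = \ell_\alpha \neq 0$, so an induction on $\delta$ gives every $c_\gamma = 0$; hence $f \in J$ and $\anni = J$.

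It remains to identify $\ideal{\border}$, which is where $\ssl$ enters. From $\anni = J$ we get $\anni^\perp = J^\perp$, and a direct check shows $J^\perp = \mathrm{span}_\KK\{\partial^\gamma \mid \gamma \leq d\}$ (a functional kills $J$ exactly when no exponent vector in its support exceeds $d$), so $\ssl = \{X^\gamma \mid \gamma \leq d\}$. The border of this box is $\border = \bigcup_{i=1}^n\{x_i^{d_i+1}X^\delta \mid \delta_i = 0,\ \delta_j \leq d_j \ (j\neq i)\}$; each such monomial is a multiple of $x_i^{d_i+1}$, and each $x_i^{d_i+1}$ itself lies in $\border$ (dividing it by $x_i$ gives $x_i^{d_i}\in\ssl$, while $x_i^{d_i+1}\notin\ssl$). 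Hence $\ideal{\border} = \ideal{x_1^{d_1+1},\dots,x_n^{d_n+1}} = J = \anni$.

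I do not expect a real obstacle here; the two points deserving care are keeping the normal-form reduction of $f$ inside $\anni$ (precisely why $J \subseteq \anni$ is proved first) and verifying the vanishing pattern that makes the linear system triangular. The argument also makes clear why this is the degenerate case: $\ell_\alpha \neq 0$ says the corner value of the sequence is nonzero, the most nondegenerate a sequence supported in its box of partial degrees can be, and it pins the annihilator to the box ideal with no further computation.
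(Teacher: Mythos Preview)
Your argument is correct. The triangular system you set up is exactly right: evaluating $f\cdot\ell$ at $\tau=d-\delta$ isolates the coefficients $c_\gamma$ with $\gamma\le\delta$, the diagonal entry is $\ell_d=\ell_\alpha\ne 0$, and well-founded induction along the divisibility order kills every $c_\gamma$. The identification $\ssl=\{X^\gamma:\gamma\le d\}$ then follows from $\anni=J$, and your description of $\border$ and $\ideal{\border}=J$ is accurate.

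The paper's route is organized differently. It does not introduce the box ideal $J$ at all; instead it argues by contradiction directly on $\ssl$ and $\border$. Taking a nonzero $f\in\anni$ supported on $\ssl$ (after reduction modulo $\ideal{\border}\subseteq\anni$), the paper picks the smallest monomial $X^{\beta_0}$ of $f$ under a fixed term order, multiplies $f$ by $X^{\alpha}/X^{\beta_0}$, and observes that in the product every monomial other than $X^\alpha$ lies strictly above $X^\alpha$ and hence outside $\ssl$; evaluating at index $0$ then gives $c_{\beta_0}\ell_\alpha=0$, a contradiction. The underlying mechanism is the same as yours---shifting the polynomial by $X^{\alpha-\beta_0}$ and evaluating at $0$ is exactly evaluating the original $f\cdot\ell$ at $\tau=\alpha-\beta_0$---but the paper peels off one coefficient at a time via a term order, while you solve the whole system at once via divisibility.

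What your packaging buys: you never have to assume in advance that $\ssl$ sits inside the box $\{X^\gamma:\gamma\le d\}$; you deduce this from $\anni=J$. The paper's proof tacitly uses $X^{\beta_0}\mid X^\alpha$ and ``$X^\gamma>X^\alpha\Rightarrow X^\gamma\notin\ssl$'', which amount to knowing the shape of $\ssl$ already. What the paper's packaging buys: brevity, and it stays entirely within the objects $\ssl$, $\border$ that the algorithm actually manipulates, without naming the auxiliary ideal $J$.
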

 \begin{proof}
  Let $f(x) = \sum\limits_{X^{\lambda} \in \ssl} c_{\lambda}X^{\lambda} \in 
  \anni \setminus \border$. Put a term-ordering on the monomials and assume 
  that $\beta_0$ is the smallest among the monomials in $f(X)$, i.e., 
  $\forall j \ X^{\beta_0} < X^{\beta_j}$. As 
  $\forall j \ X^{\lambda} | X^\alpha$, then obviously 
  $X^{\beta_0} | X^\alpha$.
  Therefore, $\frac{X^\alpha}{X^{\beta_0}}f(X) \in \anni$. Let 
  $g(X)=\frac{X^\alpha}{X^{\beta_0}}f(X)$, and write it as 
  $g(X) = \sum\limits_{\lambda \in \ssl}^s c_{\lambda}\frac{X^{\lambda+\alpha}}
  {X^{\beta_0}} = c_{\beta_0}X^\alpha + $ other terms. Then
  $g(X) \cdot \ell = c_{\beta_0} \ell_\alpha+\sum\limits_{\gamma \ne \alpha}
  \ell_{\gamma}$. In the left hand side, for all $\gamma \ne \alpha$, we have 
  that $X^\alpha < X^\gamma$, which means that $X^\gamma \notin \ssl$. 
  Therefore $g(X) \cdot \ell = c_{\beta_0}\ell_\alpha =0$. But by the 
  assumption 
  of the lemma $\ell_\alpha \ne 0$, which implies that $c_{\beta_0}=0$ which 
  is a contradiction with the assumption that $X^{\beta_0}$ is the smallest 
  term in $f(X)$.
 \end{proof}
  
  Also as we already know that $\border \subseteq \anni$, we compute 
  $\border$ and add it to the annihilator before running the integration 
  method.

\subsection{The Algorithm}  
 Summarizing, we have the following algorithm.
 \vspace{5mm}

 \begin{algorithm}[H]\label{algorithm}
  \caption{$\mathsf{AnnihilatorViaDuality(\ell)}$ }
  {\bf Input:} $(\ell_{\gamma})_{\gamma \in \ssl }$, 
  for a finite set $\ssl$ \\
  {\bf Output:} A basis for $\anni$ as a sub-vector space of $\KK[X]$, 
  \begin{enumerate}
   \item Compute $p_\ell(X) = \sum\limits_{\gamma \in \ssl} 
    \ell_{\gamma} X^{\gamma} $, the generating function of $\ell$.
   \item Let $d=deg(p_\ell(X))$, the multi-degree of $p_\ell(X)$
      and $m=deg(X^d)$
   \item\label{border-comp} Compute $\border$ 
   \item If $\ell_\alpha \ne 0$ then {\bf return} $\border$, else 
   \item\label{recip-comp} Compute a basis $\Lambda_1(\partial),\ldots,\Lambda_s(\partial)$ 
   for the subspace $D_\ell$ of $\ideal{rec(p_\ell(X))}^\perp$ 
   (Remark \ref{D})   via the integration   method
   \begin{enumerate}
    \item ({\bf Optimization}) delete columns with labels not in 
     $\ssl \cup \border$ in the matrices for computing  
     \ \ $\ideal{L(X)}_d^\perp $ 
   \end{enumerate}
   \item $\anni = \border \bigcup 
   \{ rec(\Lambda_1(X)),\ldots, rec(\Lambda_s(X)) \}$
   \item {\bf return} $\anni$
   \end{enumerate}
 \end{algorithm}

  \vspace{5mm} 

  Let $\anni_{\geq t}$ be the elements of the annihilator with monomials 
  of degree at least $t$. If $f(X) \in \anni_{\geq t}$, then 
  $x_if(X) \in \anni_{\geq t+1}$ for all $i=1,\ldots,n$. This corresponds 
  to the closedness property for dual modules. More precisely, 
  \begin{remark}\label{closedness-annihiator}
   In analogy with \cite{mourrain1997}, the closedness property of the dual 
   is equivalent to the fact that the annihilator $\anni$ is closed   under 
   multiplication, i.e., if 
   $   \anni_{\geq d-(t-1)} = \ideal{b_1(X),\ldots,b_{s_{t-1}}(X)}$, then 
   \[ 
   b(X) \in \anni_{\geq d-t} \Leftrightarrow x_1b(X),\ldots,x_nb(X) 
   \in \anni_{\geq d-(t-1)}.
   \]   
  \end{remark}
  
  One can re-write Theorem \ref{them-integration} using the above notation 
  and Proposition \ref{dual-ann-prop} in order to obtain a formula for 
  $b(X) \in \anni_{\geq d-t}$ using a basis of $\anni_{\geq d-(t-1)}$. Then 
  replacing integration with division in Theorem \ref{them-integration}, 
  one can directly take advantage of the fact that the
  annihilator is closed under multiplication. However, the computations 
  via this formula will be exactly the same as the computations in 
  Algorithm \ref{algorithm}. Berthomieu \& Faug\`ere's 
  in \cite{berthomieu-faugere-issac18} use this idea in their 
  Polynomial-Division-Based algorithm. 
  
 \begin{remark}\label{{several-seq}}
 The algorithm can be easily generalized to compute the annihilator of 
 more than one sequence, say $\ell_1,\ldots,\ell_t$. In this case, 
 $\ssl$ should be replaced  with $\mathcal{S}=\bigcup_{i=1}^t \ssl^{(i)}$, 
 where
 $\ssl^{(i)}$ is the support of $\ell_i$. 
 Also in step \ref{border-comp}, $\border$, the border of a sequence should 
 be replaced with $\mathcal{B}=\bigcup\limits_{i=1}^t \border^i$, where 
 $\border^i$ is the border of the sequence $\ell_i$.
 In step \ref{recip-comp} of the 
 algorithm, 
 one should compute the subspace of 
 $\ideal{rec(p_{\ell_1}(X)),\ldots,rec(p_{\ell_t}(X))}^\perp$
 whose monomials are in the reciprocal of $\mathcal{S}$. 
 \end{remark}

  \begin{remark} 
  Although we assume that in our algorithm the annihilator 
 is an $\mprimary$ ideal, however, 
 the algorithm is correct for sequences whose annihilator is not 
 necessarily   $\mprimary$, but have the flat extension property as in 
 Mourrain's algorithm. Discussion on the flat extension property is out of 
 the scope of this paper. For the details we refer the 
 reader to \cite{mourrain-gorenstein}.
 \end{remark}
  
\section{Complexity and Experiments}\label{section-complexity}

In this section, we compare our algorithm with other existing
algorithms, both in terms of their complexity and experiments.

\subsection{Complexity of the Algorithm}

 Let us remind that for our algorithm the input is a sequence 
 $\ell= (\ell_\alpha)_{\alpha \in \ssl \subseteq\NN^n}$, in which the values 
 $\ell_\alpha$ are given over a finite set $\ssl$. Also  $s:=|\ssl|$ and 
 $r:=\dim_\KK \quot{\KK[X]}{\anni}$ and $D_\ell$ is as in Remark \ref{D}.

  \begin{theorem}\label{annividual-complxity}
   Algorithm~\ref{algorithm} is correct and the total number of
   arithmetic operations in Algorithm~\ref{algorithm} using the
   integration method can be bounded by
 \begin{equation}\label{complexity-formula}
   O\left(  n^2(s-r)^3+ns     \right).
 \end{equation}
 \end{theorem}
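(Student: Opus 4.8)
The plan is to split the statement into its two halves---correctness of Algorithm~\ref{algorithm} and the operation count---since they rest on different ingredients.

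\emph{Correctness.} This is mostly a matter of assembling what is already proved. If $\ell_\alpha\neq0$ at Step~4, Lemma~\ref{degenerate} gives $\anni=\ideal{\border}$, so returning $\border$ is correct. Otherwise I would invoke Proposition~\ref{dual-ann-prop}, whose bijection $f(X)\in\anni\iff rec(f(\partial))\in\ideal{rec(p_\ell(X))}^{\perp}$, restricted to operators supported on $rec(\ssl)$ exactly as in the passage from $H_\ell$ to $\widetilde{M}_\ell$, identifies the relevant part of $\anni$ with the space $D_\ell$ of Remark~\ref{D}; hence the $rec(\Lambda_i(X))$ returned in Step~6, together with $\border\subseteq\anni$, give the claimed basis/generating set of $\anni$. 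What is left is that Step~5 really produces a basis of $D_\ell$: the integration method is correct by Theorem~\ref{them-integration}, and the column deletion in Step~5(a) loses nothing because every operator appearing in the recursion is obtained from elements of $D_\ell$ by the truncated primitives $\int_k$ of~(\ref{lambda}) and the derivations $\partial_k\cdot(-)$ of~(\ref{cond2}), hence has support inside $rec(\ssl\cup\border)$, a set closed under those operations. The base case is harmless: either $\iperp_0=\emptyset$, where the integration method legitimately starts, or $1_0\in\iperp_0$, which by Proposition~\ref{dual-ann-prop} means $X^d\in\anni$, i.e.\ $\ell_\alpha\neq0$, already handled in Step~4.

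\emph{The key dimension count.} The quantity $s-r$ enters through the identity $\dim D_\ell=s-r$, which I would establish first. Since $\anni$ is $\mprimary$, $\ssl$ contains every standard monomial of $\quot{\KK[X]}{\anni}$, so its $\KK$-span $V$ surjects onto $\quot{\KK[X]}{\anni}$. The Hankel matrix $H_\ell$ of size $s$ is the Gram matrix of the symmetric form $(f,g)\mapsto\ell^*(fg)=(f\cdot\ell^*)(g)$ on $V$; an $f\in V$ lies in its radical iff $f\cdot\ell^*$ vanishes on $V$, which (as $V$ surjects onto the quotient and $f\cdot\ell^*\in\anni^{\perp}$) forces $f\cdot\ell^*=0$, i.e.\ $f\in\ann(\ell^*)=\anni$. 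Hence the radical is $V\cap\anni$, $\operatorname{rank}H_\ell=\dim\bigl(V/(V\cap\anni)\bigr)=\dim\quot{\KK[X]}{\anni}=r$, so $\dim\operatorname{Ker}H_\ell=s-r$; by Remark~\ref{D}, $\dim D_\ell=\dim\operatorname{Ker}\widetilde{M}_\ell=s-r$.

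\emph{Operation count and main obstacle.} Steps~1, 2 and~4 only read the input and scan the $s$ exponent vectors ($O(ns)$), and Step~3 forms $x_im$ for each $m\in\ssl$ and each $i$ and tests membership in $\ssl$ with a dictionary ($O(ns)$). Step~5 dominates: running the integration method on the single generator $rec(p_\ell(X))$ with the column restriction of Step~5(a) is an instance of the analysis behind Proposition~\ref{integ-complexity} with one ideal generator and with the role of the multiplicity played by $\dim D_\ell=s-r$, so the matrices stay of size $O(s-r)$, there are $O(n^2)$ linear conditions imposed at each of the $O(s-r)$ productive degree steps, and the systems are solved in $O(n^2(s-r)^3)$ operations in total, while the auxiliary (anti)derivations, evaluations and the $s-r$ reciprocals of Step~6 (polynomials with at most $s$ terms) contribute $O(ns)$; summing gives $O\bigl(n^2(s-r)^3+ns\bigr)$. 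Correctness is essentially bookkeeping, so I expect the substantive difficulty to be in the count, and precisely in making the Step~5(a) optimisation rigorous: one must check not merely that restricting the integration recursion to the $rec(\ssl\cup\border)$-columns is lossless, but that it genuinely caps every intermediate linear system at size $O(s-r)$ rather than the a~priori much larger dimension of $\iperp_m$ (or of the full Macaulay matrix, of size $\binom{m+n-1}{m}$)---this is exactly what replaces the $O(s^{\omega})$ of the naive Hankel solve by $O((s-r)^3)$, and verifying it leans on the dimension identity above together with a careful argument that the restricted recursion is self-contained degree by degree.
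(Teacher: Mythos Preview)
Your proposal is essentially correct and follows the paper's own line: correctness via Proposition~\ref{dual-ann-prop} and Lemma~\ref{degenerate}, the bound on Step~\ref{border-comp} as $O(ns)$, the identification $\dim D_\ell=s-r$, and then Step~\ref{recip-comp} bounded by substituting $s-r$ for $\mu$ in Proposition~\ref{integ-complexity}. Your Gram-matrix argument for $\dim D_\ell=s-r$ is a pleasant variant of the paper's computation: the paper instead sets $\anni'=\{f\in\anni:\operatorname{Supp}(f)\subseteq\ssl\}$, notes $D_\ell\cong\anni'$ via reciprocals, and gets $\dim\anni'=\dim W-\dim(W/\anni')=s-r$ using that the monomials outside $\ssl$ already lie in $\anni$; your version reaches the same $V\cap\anni$ as the radical of the Hankel form, which is the same linear-algebra fact viewed through rank rather than quotient. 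You also make explicit the justification of the column deletion in Step~5(a), which the paper's proof simply absorbs into the invocation of Proposition~\ref{integ-complexity}.

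One genuine slip to fix: your base-case analysis is inverted. From Proposition~\ref{dual-ann-prop} (or directly), $1_0\in\iperp_0\iff X^d\in\anni\iff \ell_d=0$, not $\ell_\alpha\neq0$. Thus Step~4 disposes of the case $\ell_d\neq0$ (via Lemma~\ref{degenerate}), and in the \emph{else} branch one always has $1_0\in\iperp_0$, so the integration method has its usual seed. Your sentence ``either $\iperp_0=\emptyset$, where the integration method legitimately starts, or $1_0\in\iperp_0$\ldots i.e.\ $\ell_\alpha\neq0$'' should be replaced by this corrected dichotomy; once that is done, the proof goes through exactly as you outline.
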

 \begin{proof}
   The correctness of the algorithm comes from
   Proposition\ref{dual-ann-prop} and Lemma~\ref{degenerate}.  For the
   complexity bound, we only compute the number of arithmetic
   operations in steps \ref{border-comp} for computing $\border$ and
   \ref{recip-comp} for computing a basis for $D_\ell$.  The cost of
   computations in other steps can be ignored in comparison with those
   two steps.
  
 Computing step {border-comp} is basically reading off the members of $\border$, for 
 which the cost is $|\border| \leq n |\ssl| = ns$. This is the second summand 
 in formula \ref{complexity-formula}.
  
 In order to compute the complexity of computing a basis for $D_\ell$ using 
 the integration method, below we will explain a modification of 
 Proposition 4.1 in \cite{mourrain1997}  on the complexity of the integration 
 method for ideals with an isolated point into an arbitrary ideal. This 
 comes directly from the proof of Proposition \ref{complexity-formula} in 
 \cite{mourrain1997}.
  
 With the input an ideal and an isolated point $\zeta$ of the ideal, the 
 integration method computes a basis for $\qzeta^\perp$, where $\qzeta$ 
 is the primary component of the ideal corresponding to $\zeta$.
 The size of the basis for $\qzeta^\perp$ is 
 $\mu =\dim \quot{\KK[X}{\qzeta} = \dim \qzeta^\perp$, the multiplicity of 
 $\zeta$. This is where $\mu$ in Proposition 4.1 in \cite{mourrain1997} comes 
 from.
 AnnihilatorViaDuality instead computes a basis  for the vector space 
 $D_\ell$. 
 Let $\{ \Lambda_1(\partial),\ldots,\Lambda_t(\partial) \}$ be such a basis.
 Then $\mu$ replaced with $t$ in Proposition 4.1 in \cite{mourrain1997}
 gives the complexity of computing a basis for $D_\ell$ using the 
 integration  method. Below we prove that $t=s-r$. 
  
 Consider $\anni$ as a vector space over $\KK$. By Proposition 
 \ref{dual-ann-prop}, members of $D_\ell$ are in one-to-one correspondence 
 with the the members of 
 $\anni' = \{ f(X) \in \anni \ | \ Supp(f(X)) \subseteq \ssl  \}$.
 $\anni'$ is a subspace of $\anni$, and 
 $\{ f_1(X) = rec(\Lambda_1(X)), \ldots, f_t(X)=rec(\Lambda_t(X))\}$ is
 a basis of $\anni'$.
 One can easily check that 
 $\{ \Lambda_1(\partial),\ldots,\Lambda_t(\partial) \}$ is  a  linearly 
 independent set in $D_\ell$ if and only if $\{f_1(X),\ldots,f_t(X)\}$ is  
 a linearly independent set in $\anni$. $\anni$---as an ideal--- is 
 generated by 
 $\{ f_1(X),\ldots,f_t(X)\} \bigcup \border$. Let $W = \ideal{\ssl}$ be the 
 vector space spanned by the monomials in $\ssl$. Then $\anni'$ is a subspace 
 of $W$, and therefore we have 
 $\dim(\anni') +  \dim (\quot{W}{\anni'}) = \dim(W)$. 
 But $\dim(\quot{W}{\anni'}) = \dim(\quot{\KK[X]}{\anni'})$, 
 because all 
 monomials in $\KK[X] \setminus \ssl$ are zero in $\quot{\KK[X]}{\anni'}$.
 Therefore  $\dim(\anni') = s-r$.
 
 Replacing $\mu$ with $s-r$ in Proposition 4.1 in \cite{mourrain1997}, the 
 complexity of computing a basis for $D_\ell$ can be bounded by 
 $O(n^2 (s-r)^3 + n(s-r) C)$, where $C$---in analogy with Proposition 
 4.1 in \cite{mourrain1997}---can be bounded by $O(n(s-r)L)$, where $L$ is the 
 number of monomials obtained by derivation 
 of 
 the monomials of $rec(p_\ell(X))$ except for those that are out of a basis 
 for $\quot{\KK[X]}{\anni}$. This can be bound by $s-r$. 
 So complexity of computing $C$ can be bounded by $n^2(s-r)^3$, and 
 therefore the complexity of computing a basis for $D_\ell$ is at most
 $O(n^2 (s-r)^3)$. This proves the proposition.
 \end{proof}

 We remind that the complexity of computing a basis for $D_\ell$ using 
 Macaulay's algorithm is the same as computing the annihilator via Hankel 
 matrix, which can be bounded by $s^\omega$. 
 The complexity of Mourrain's algorithm as well as Berthomieu \& 
 Faugere's Polynomial-Division-Based algorithm is $nsr^2$, while the
 complexity of AnnihilatorViaDuality algorithm is $n^n(s-r)^3 +ns$.
 Therefore, our algorithm is faster when $s-r$ is small, and is slower 
 when $r<<s$.
   
 A simple case when our algorithm is faster is when the annihilator 
 is a monomial ideal. In this case $\ssl = B$, and our algorithm only reads 
 off $\border$, which costs at most $ns$, while this is the worst case for 
 all Mourrain's and Berthomieu-Faugere's algorithms. 
 Another set of sequences for which $s-r$ is small is when we {\it cut a 
 small corner from $\ssl$}. The following is an example of such a 
sequence.
\begin{example}
 Let $\ssl$ be the set of the monomials under the staircase of the 
 monomial ideal generated by $x^{4}, y^3$ and $xy^2$. Define the sequence 
 $\ell$  with the support $\ssl$ as follows.
 $\ell(x^4) =\ell(y)= d, \ell(x^4y)=\ell(y^2)=8$, and 
 for the rest of the monomials in $\ssl$, $\ell(x^iy^j)=i+j$. 
 $\ell$ has value zero for all monomials outside $\ssl$.
 Then the annihilator is generated by 
 \begin{equation}
   \border \cup \{
   x^4y^2, 
   x^3y^2-x^4+x^3y-x^2y^2, 
   x^4y-x^3y^2 \},
 \end{equation}
 and the quotient algebra of the annihilator ideal includes all 
 the monomials in $\ssl$ except for $x^4y^2$.
\end{example}

One can easily generalize the above example into sequences whose
support is defined by the staircase of certain ideals.

\begin{example}
  Consider the monomial ideal $\ideal{x^{d}, y^3,xy^2}$, where $d$ is
  a positive integer, and its staircase, i.e., the monomials that form
  a basis for the vector space of the quotient with respect to this
  ideal. Let $\ell(x^{d}) =\ell(y)= d, \ell(x^{d}y)=\ell(y^2)=2d$, and
  $\ell(x^iy^j)=i+j$ for the rest of the monomials in $\ssl$.  Then
  the quotient algebra of the annihilator ideal includes all of the
  monomials in $\ssl$ except for $x^dy^2$.
 \end{example}

 In Section \ref{section-benchmark}, we will present several examples of
 sequences with small $s-r$.

\begin{example}
 In the case that the annihilator is generated by 
 $\{x_i^{d_i}-x_j\ | \ 1\leq i \ne j\leq n \}$, 
 Mourrain's algorithm, as well as Berthomieu-Fauger's, 
 considers monomials 
 $1,x_i, \ldots,x_i^{d_i}$, while AnnihilatorViaDuality considers all 
 monomials whose partial degree at most 
 $d_i, \ i=1,\ldots,n$, except for $d_1+\cdots+d_m$ monomials in the basis 
 for the quotient vector space $\quot{\KK[X]}{\anni}$.
\end{example}

As mentioned earlier, the algorithms presented by Mora et al. in~\cite{mari-mora-moel-functionals} have complexity bound $\mathcal{O} (nt^3+nft^2)$, where $t$ is the number of given linear maps/sequences and $f$ is the cost of evaluation maps on monomials. These algorithms assume that the given sequences span the orthogonal of the annihilator as a subspace of the dual module. As there is a one-to-one correspondence between the generators of the dual vector space of an ideal and the vector space of the quotient of the polynomial ring with respect to that ideal, we have that $t \geq r$. So the complexity bound mentioned above becomes $\mathcal{O}(nr^3+nfr^2)$ if $t=r$, and moreover, if we assume that the cost of evaluation at each monomial is a constant, then the complexity bound becomes $\mathcal{O}(nr^3)$. 

One can check that based on our assumptions that the annihilator is $\mprimary$ and hence $s$ first term of the sequence is the non-zero terms, we have that $s\leq t$ and therefore, our algorithm is faster than Mora et al.'s algorithm. Our algorithm applied to $t$ sequences, where $t$ is as in Mora, et al.'s algorithms in~\cite{mari-mora-moel-functionals}, has complexity bound $\mathcal{O}((t-r)^3+nt)$ (because in this case having the first $t$ elements of the sequences is enough to determine the annihilator), which is faster than Mora et al.'s algorithms.
  
\subsection{Experiments}\label{section-benchmark}
  
 We have implemented computing annihilator using Hankel matrices and 
 AnnihilatorViaDuality algorithm in Maple. For latter, we have   used the 
 improved version of the Integration method for the dual computations
 \cite{angelos,mantzaflaris-rahkooy-zafeirakopoulos}, implemented in the 
 {\it mroot} package in Maple. Our implementation is available at 
 \url{https://github.com/rahkooy/dual-annihilator}. In the following, we 
 present a benchmark that we have obtained using AMD FX(tm)-6300 Six-Core, 
 with 16 GB RAM.
 
 For the benchmark, we have generated two sets  of sequences that are shown 
 in Tables \ref{tbl-randseqpol} and \ref{tbl-sequence-values}. The examples 
 are generated as follows:
    
 \begin{itemize}
  \item The sequences in Table \ref{tbl-randseqpol} are constructed as 
  follows. Let $\mathcal{M}$ be the set of monomials under the staircase of 
  a monomial ideal. We call the generators of this monomial ideal the 
  defining monomials of $\mathcal{M}$. Let 
  $(\ell_\alpha)_{\alpha \in \mathcal{M}}, \ \ell_\alpha \in \KK$ be a random 
  assignment of values in $\KK$ to the monomials in $\mathcal{M}$. For all 
  $\alpha \in \NN^n$, set $\ell_\alpha =0$. Then 
  $(\ell_\alpha)_{\alpha \in \NN^n} \in \KK^{\NN^n}$ and $\ssl = \mathcal{M}$, 
  where $\ssl$ is the finite set assigned to the the sequence $\ell$, which 
  is defined in Subsection \ref{prelim-dual}. Then $\ell$ satisfies the 
  conditions of the input of Algorithm \ref{algorithm}. 
 \item Sequences in Table \ref{tbl-sequence-values} are generated by 
  pre-assigned values to a given set of monomials $\mathcal{M}$, which is 
  closed under division.
 \end{itemize}
    
 Table \ref{tbl-hankel-dualann} includes the timings and memory size of the
 computations. Columns $H_\ell$ includes the size of the Hankel matrix,
 then the memory and CPU time used for computing the annihilator via that 
 Hankel matrix.  Column $I_\ell$ shows the size of the matrix in the 
 AnnihilatorViaDuality algorithm, and respective columns show the memory 
 used and the CPU time for this algorithm.
    
 One can see from Table \ref{tbl-hankel-dualann} that for  sequences    
 $J12$, $J3$, $J1$ and $l6$ AnnihilatorViaDuality algorithm has better 
 performance than computing the annihilator via Hankel matrices. For those 
 sequences, $s - r$ is equal to $250-214, 199-148, 79-54$ and $15-8$, 
 respectively. For sequences $J11$, $J9$, $J4$ and $\ell_5$ in 
 Table \ref{tbl-hankel-dualann}, AnnihilatorViaDuality algorithm much slower 
 performance than computing the annihilator using the Hankel matrices, as 
 $s-r$  for those sequences is $180-39, 80-48,24-9 ,8-1 $, respectively.

 \begin{table}[H]
   \centering \def\arraystretch{1.5} \setlength\tabcolsep{30pt}
      \caption{{\bf Random Sequences Over a Given Support $\mathcal{M}$}}
      \vspace{5mm}
      \label{tbl-randseqpol}
      \begin{tabular}{|c|l|}
      \hline
        \textbf{Sequence}& {\bf Monomials defining $\mathcal{M}$} \\
        \hline
        $J12$ & $x^{13},y^5,z^4,x^{12} z^2$ \\        
         \hline
        $J11$ & $x^{13},y^5,z^4,x^5 z^2$ \\
         \hline
        $J10$ & $x^8,y^5,z^4,x^5 z^2$ \\
         \hline
        $J9$ & $x^5,y^5,z^4,y^3z^2$ \\
         \hline
        $J6$ & $x^4,y^5,z^4,y^4z^3$\\
         \hline
        $J5$ & $x^2,y^5,z^4,y^3z^2$ \\   
         \hline
        $J4$ & $x^2,y^5,z^4,y z^2$\\ 
         \hline
        $J3$ & $x^{50},x^{49} y^3,y^4$\\ 
         \hline
        $J2$ & $x^{50},x^{20}y^3,y^4$\\   
         \hline
        $J1$ & $x^{20},x^{19}y^3,y^4$\\   
        \hline
       \end{tabular}
   \end{table}

   \begin{table}[H]
      \centering
      \def\arraystretch{1.5}
      \setlength\tabcolsep{15pt}
      \caption{{
       \bf Sequences Given by Their Values Over a Support $\mathcal{M}$}}
       \vspace{7mm}
      \label{tbl-sequence-values}
      \begin{tabular}{|c|l|}
      \hline
        \textbf{Sequence}& {\bf values over $\mathcal{M}$} \\
        \hline

         \hline
        $\ell_0 $ & $table([1=0,x=2,y=4,z=1,u=1,xy=1,xz=2,xu=1,$ \\
         & $yz=1,yu=1,  zu=1,xyz=1, xyu=1,xzu=1,yzu=1,xyzu=0$\\\\
         \hline
        $\ell_{11}$ & $table([1=0,x=2,y=4,z=1,u=1,xy=1,xz=0,xu=1,$ \\
        & $yz=1,yu=1,zu=1,xyz=1,xyu=1,xzu=1,yzu=1,xyzu=2]$\\
        \hline
        $\ell_{6}$ & $table([1=1,x=2,x^2=2,x^3=2,x^4=2,y=3,xy=3,x^2y=2,$\\
        & $x^3y=2,  x^4y=2,y^2=2,xy^2=2,x^2y^2=2,x^3y^2=2,x^4y^2=0]);$\\
        \hline
        $\ell_{5}$ & $table([1=0,x=2,y=4,z=1,xy=1,xz=0,yz=1,xyz=0])$\\   
        \hline
        $\ell_{1}$ & $table([1=2,x=2,y=1,xy=0, x^2=1,y^2=3])  $\\   
        \hline
      \end{tabular}
    \end{table}    
    
\begin{table}[H]
  \centering
  \def\arraystretch{1.5}
  \setlength\tabcolsep{4pt}
  \caption{
   {\bf 
   Computing Annihilator Using Hankel Matrices vs AnnihilatorViaDuality}
  }
  \vspace{8mm}
  \label{tbl-hankel-dualann}
  \begin{tabular}{|c|c|c|c|c|c|c|}
  \hline
   \textbf{sequence}& $\bm{H_\ell}$ & \textbf{memory} & {\bf CPU time} & 
   $\bm{I_\ell}$ & \textbf{memory} & {\bf CPU time} \\
  \hline
    $\ell_{11} $ & $16, 16$ & $1$MB & $20$ms & 
      $0$ (degenerate) & $255.58$KB & $4$ms \\
      \hline
   $J3$ & $199 \times 199$ & $2.26$GB & $21.65$sec &  $4 \times 4$ & 
    $5.84$MB & $76$ms \\
      \hline
    $J12$ & $250 \times 250$ & $4.70$GB & $32.89$sec & $103 \times 36$ & 
    $92.28$MB & $986$ms \\
      \hline
    $J1$ & $79 \times  79$ & $81.34$MB & $776$ms & $4 \times 4$ &
    $4.11$MB & $60$ms \\
      \hline
    $\ell_6$ & $15, 15$ & $0.72$MB & 
     $16$ms & $ 3, 3$ & $2.20$MB & $36$ ms \\    
      \hline
    $\ell_1$ & $6, 6$ & $1.74$MB & $31$ms & 
      $4, 4$ & $2.96$MB &$57$ms  \\
      \hline
    $I10$ & $11 \times 11$ & $0.54$MB & $12$ms  & $15 \times 8$ & 
    $4.36$MB & $68$ms \\
      \hline
    $\ell_0$ & $16,16$ & $1.03$MB & $20$ms & $38 \times 12$ & 
      $10.36$MB & $188$ms  \\
      \hline
    $\ell_5$ & $8,8$ & $417.05$KB & 
     $8$ms &  $17, 5$ & $3.55$MB & $56$ms \\
      \hline
    $J6$ & $76 \times 76$ & $77.02$MB & $593$ms & $86 \times29$ & 
    $76.57$MB & $929$ms \\    
      \hline
    $J5$ & $32 \times 32$ & $4.57$MB & $51$ms   & $82 \times25$ & 
    $41.26$MB & $494$ms  \\
      \hline
    $J10$ & $130 \times 130$ & $369.41$MB & $2.58$sec &  3$83 \times 119$ & 
    $1.35$GB & $16.72$sec\\    
      \hline
    $J2$  & $170 \times 170$ & $1.19$GB & $11.19$sec  & $1178 \times 50$ & 
    $5.17$GB & $108.52$sec \\
      \hline
    $J4$ & $24 \times 24$ & $2.18$MB & $35$ms & $121 \times 29$ & 
    $57.79$MB & $716$ms  \\
      \hline
    $J9$ & $80 \times 80$ & $78.21$MB & $570$ms & 
    $227 \times 73$ & $408.05$MB & $4.70$sec \\
      \hline
    $J11$ & $180 \times 180$ & $1.22$GB & $7.74$sec & $1435 \times 249$ & $19.22$GB & 
    $4.59$min \\
    \hline
   \end{tabular}
\end{table}
 
\begin{acknowledgment}
  The author would like to thanks E. Schost and A. Mantzaflaris for
  the discussions.
\end{acknowledgment}

\bibliographystyle{plain}
\bibliography{main.bib}

\end{document}